\def\expandafter\UrlBreaks\expandafter{\UrlBreaks
  \do\a\do\b\do\c\do\d\do\e\do\f\do\g\do\h\do\i\do\j%
  \do\k\do\l\do\m\do\n\do\o\do\p\do\q\do\r\do\s\do\t%
  \do\u\do\v\do\w\do\x\do\y\do\z\do\A\do\B\do\C\do\D%
  \do\E\do\F\do\G\do\H\do\I\do\J\do\K\do\L\do\M\do\N%
  \do\O\do\P\do\Q\do\R\do\S\do\T\do\U\do\V\do\W\do\X%
  \do\Y\do\Z}
\newtheorem{theorem}{Theorem}
\theoremstyle{definition}
\newcommand{\bb}[1]{\mathbb{#1}}
\newcommand{\mcal}[1]{\mathcal{#1}}
\newcommand{\tf}{\tilde{f}}
\newcommand*\dif{\mathop{}\!\mathrm{d}} 
\newcommand{\rom}[1]{(\lowercase\expandafter{\romannumeral #1\relax})}
\newcommand*\circlenum[1]{\raisebox{.5pt}{\textcircled{\raisebox{-.9pt} {#1}}}}
\begin{document}
\title{A Dynamic Hierarchical Framework for IoT-assisted Metaverse Synchronization}
\author{Yue~Han,
Dusit~Niyato,~\IEEEmembership{Fellow,~IEEE},
Cyril~Leung, ~\IEEEmembership{Life Member,~IEEE},
Dong~In~Kim,~\IEEEmembership{Fellow,~IEEE},
Kun~Zhu,~\IEEEmembership{Member,~IEEE},
Shaohan~Feng,
Xuemin(Sherman)~Shen,~\IEEEmembership{Fellow,~IEEE},
Chunyan~Miao,~\IEEEmembership{Senior Member,~IEEE}
\thanks{
Y. Han is with Alibaba Group and the Alibaba-NTU Joint Research Institute (JRI), Nanyang Technological University (NTU), Singapore. E-mail: hany0028@e.ntu.edu.sg.}
\thanks{D. Niyato and C. Miao are with the School of Computer Science and Engineering (SCSE), NTU, Singapore. E-mail: \{dniyato,ascymiao\}@ntu.edg.sg.}
\thanks{C. Leung is with The University of British Columbia, Canada. E-mail: cleung@ece.ubc.ca.}
\thanks{D.I. Kim is with the Department of Electrical and Computer Engineering, Sungkyunkwan University, Suwon, South Korea. E-mail: dikim@skku.ac.kr.}
\thanks{K. Zhu is with the College of Computer Science and Technology, Nanjing University of Aeronautics and Astronautics, Nanjing 210016, China. E-mail: zhukun@nuaa.edu.cn}
\thanks{S. Feng is with Institute for Infocomm Research (I2R), A*STAR, Singapore. E-mail:Feng\_Shaohan@i2r.a-star.edu.sg}
\thanks{X. Shen is with the Department of Electrical and Computer Engineering, University of Waterloo, Waterloo, ON N2L 3G1, Canada. E-mail:sshen@uwaterloo.ca.}
}

\markboth{Journal of \LaTeX\ Class Files,~Vol.~14, No.~8, August~2021}%
{Shell \MakeLowercase{\textit{et al.}}: A Sample Article Using IEEEtran.cls for IEEE Journals}


\maketitle

\begin{abstract}
Metaverse has recently attracted much attention from both academia and industry. Virtual services, ranging from virtual driver training to online route optimization for smart goods delivery, are emerging in the Metaverse. To make the human experience of virtual life more real, digital twins (DTs), namely digital replications of physical objects, are key enablers. However, the status of a DT may not always accurately reflect that of its real-world twin because the latter may be subject to changes with time. As such, it is necessary to synchronize a DT with its physical counterpart to ensure that its status is accurate for virtual businesses. In this paper, we propose a dynamic hierarchical framework in which a group of Internet-of-Things (IoT) devices are incentivized to sense and collect physical objects' status information collectively so as to assists virtual service providers (VSPs) in synchronizing DTs. Based on the collected synchronization data and the value decay rate of the DTs, the VSPs can determine synchronization intensities to maximize their payoffs. 
In our proposed dynamic hierarchical framework, the lower-level evolutionary game captures the VSPs selection by the population of IoT devices, and the upper-level differential game captures the VSPs payoffs, which are affected by the synchronization strategy, IoT devices selections, and the DTs value status, given VSPs are simultaneous decision makers. We further consider the case in which some VSPs are first movers (i.e., there is sequential decision making among VSPs), and extend it as a Stackelberg differential game.  We theoretically and experimentally show that the equilibrium to the lower-level game exists and is evolutionarily robust, and provide a sensitivity analysis with respect to various system parameters. Experiments show that the both the simultaneous and Stackelberg differential game give higher accumulated payoffs compared to the baseline of a static game. 
\end{abstract}

\begin{IEEEkeywords}
Metaverse, IoTs, game theory, digital twins, resource allocation, crowdsensing
\end{IEEEkeywords}

\section{Introduction}
\IEEEPARstart{T}{he} long-term global epidemic has dramatically altered people's work and life styles. In-person social gatherings and events have gradually been replaced by various online events, e.g., UC Berkeley held its virtual commencement in $2021$ \cite{WatchBlockeleyUC2020} and \enquote{Fortnite} organized a virtual concert in $2019$, reportedly viewed by $10.7$ million people \cite{FortniteMarshmelloConcert}. These virtual events are examples of virtual business operations, supported by virtual business providers (VSPs). Such terms can be encapsulated by the concept of the Metaverse, which has been attracting interest from both academia and industry since $2019$. 

Metaverse is known as a platform to provide social, immersive, and interactive experiences with perpetual user accounts \cite{schroederSocialInteractionVirtual2002}. The key objective of the Metaverse is to host different digital/virtual worlds, in which different VSPs provide their virtual business, such as retailing \cite{gadallaMetaverseretailServiceQuality2013}, gaming \cite{volkCocreativeGameDevelopment2008}, education \cite{diazVirtualWorldResource2020},  and social-networking \cite{schroederSocialInteractionVirtual2002}, and people as the Metaverse users (hereinafter referred to as \textit{users}), can fully immerse themselves in the virtual life and experience various virtual services via their avatars, namely digital replicas of themselves. The Metaverse opens up a new world for both VSPs and users. 
On the one hand, VSPs benefit by providing virtual services in the Metaverse. For example, even with its borders closed due to a pandemic, a destination country can still generate revenue from its tourism industry by providing\textit{ virtual traveling services} \cite{kwokCOVID19ExtendedReality2021}. A global manufacturer can build a \textit{digital factory} in the Metaverse, to replicate the operational processes of multiple sites in the real world, so that different optimization strategies can be quickly tested with minimal interference to the real production \cite{leeCyberPhysicalSystemsArchitecture2015}. On the other hand, it benefits the users. For example, frequent travelers can overcome the physical challenges of in-person travel to a destination country that has closed its borders due to a pandemic, and still enjoy a \textit{virtual travel} in the Metaverse \cite{michelVirtualTravelExperiences}. A procurement manager (buyer) can virtually check on the production of goods she has ordered via the \textit{virtual factory} and efficiently negotiate terms and conditions.
It is expected that the benefits, conveniences and employment opportunities provided by the Metaverse will transform people's lives in a similar way that the Internet has done. This explains why the Metaverse is sometimes referred to as Internet of $3$D virtual worlds\cite{dionisio3DVirtualWorlds2013}.

Notwithstanding its great potential benefits, the Metaverse is still in the nascent stages of its development\cite{duanMetaverseSocialGood2021}. One challenge is how to efficiently replicate the real world in the Metaverse. One solution is through synchronization of digital twins (DTs) \cite{khanDigitalTwinEnabled6GVision2021}, which are digital replicas of living or non-living entities in the real world, with bi-directional communication between each DT and its physical entity. With DTs, Metaverse users can experience physical entities as if they were interacting with them in the real world, and VSPs can develop businesses based on DTs. 
An example is virtual driver training \cite{taheriVirtualRealityDriving2017}, in which users are trained on a simulation platform with immersive realistic experience and minimal danger. 
Here, roads, cars, traffic, passengers, and even weather can be replicated in the Metaverse. Based on the Metaverse's interoperability \cite{radoffWeb3InteroperabilityMetaverse2021},  DTs can be reused for other VSPs to support e.g., telemedicine \cite{bhattTargetedApplicationsUnmanned2018} and online route optimization for smart goods delivery \cite{sanjabProspectTheoryEnhanced2017}.
\begin{figure*}[]
\begin{minipage}[b]{\linewidth}
\centering
\includegraphics[width=0.95\linewidth]{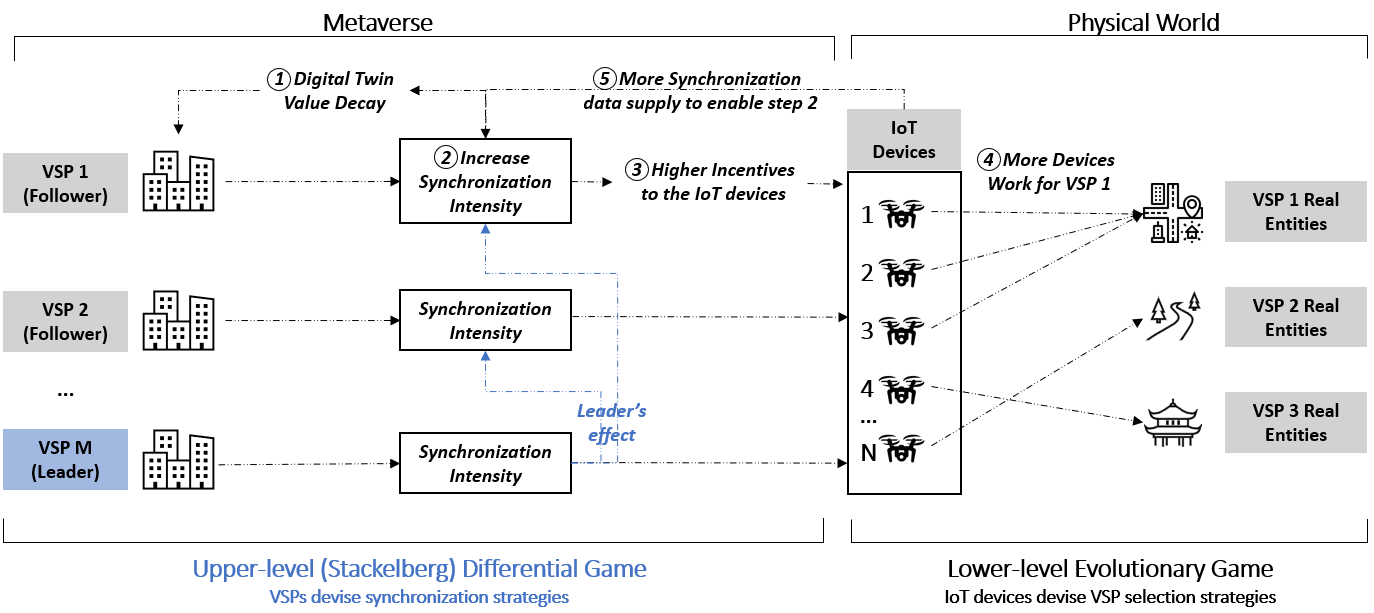}
\caption{A dynamic hierarchical framework for IoT-assisted Metaverse synchronization}
\label{fig-system-model}
\end{minipage} 
\end{figure*}

However, there are four challenges. First, there are various types of VSPs in the Metaverse and VSPs can have ad-hoc requests for additional DTs to augment virtual businesses. Therefore, it can be expensive and challenging to deploy traditional fixed and static sensors to collect the status data of physical objects. This raises the question of how to leverage \textit{movable} IoT devices to support the Metaverse synchronization. 
Second, the synchronization between DTs and their physical counterparts is needed to keep the DT states up to date and maintain DT values (characterized by e.g. reliability \cite{khanDigitalTwinEnabled6GVision2021}) for VSPs' virtual businesses. If there is a lack of proper synchronization between the virtual and real worlds, unreliable DTs can affect VSPs' business profits. Hence, an efficient resource allocation solution is needed, according to which a number of movable IoT devices can be effectively allocated to sense the corresponding physical counterparts of DTs for each of the VSPs. 
Third, it is possible that different VSPs have different levels of tolerance to  non-updated DTs, given their different business types, the extents to which they use AI, as well as computation capabilities. For example, an non-updated weather twin may leave little impact on the virtual travel service provider, but a significant impact on virtual driver training by making the simulated environment less realistic. A VSP with a stronger AI algorithm deployed and higher computation capacity can predict the patterns of DT states in the near future, and thus may have higher tolerance to non-updated DTs. The different tolerance to non-updated DTs allow VSPs optimize their synchronization intensities differently. Therefore, how to represent, model, and take into account the VSP's tolerance levels in the process of determining the optimal synchronization intensity is critical and remains as an open research problem. 
Fourth, similar to real-world economics markets, there may be some influential decision makers in the Metaverse market, such as some VSPs with large business volumes. As such, how their decisions about optimal synchronization intensity affect the other following VSPs is not addressed in the literature.


To address the above challenges, in this paper, we propose a dynamic hierarchical Metaverse synchronization framework utilizing Internet of Things (IoT) devices, in which \textit{movable} IoT devices such as drones (UAVs) can collectively sense the current states of the real entities that are linked to the DTs of the VSP. 
The system model is shown in \cref{fig-system-model} and consists of Metaverse components and Physical World components. In the Metaverse, we consider DT values to be relevant to the business profitability of VSPs and subject to natural value decay, i.e., the value naturally drops with time (Step \circlenum{1}). As such, it is necessary for VSPs to determine a proper synchronization intensity to maintain DT values without incurring excessive costs due to over-synchronization (Step \circlenum{2}). To enable this, the VSPs provide incentives to attract IoT devices to work for them by sensing the corresponding real world entities (Step \circlenum{3}). However, faced with different VSPs and their different incentive provisions, an IoT device can independently select a VSP that maximizes its utility and work for it (Step \circlenum{4}). After completing the sensing task for the selected VSP, IoT devices transmit the data to the VSP through nearby base stations (Step \circlenum{5}). With proper cloud processing, VSPs can use synchronization data to update their DTs accordingly, thereby increasing the values of the DTs. In analogous to physical world economic markets, we consider that larger VSPs in the Metaverse may have certain levels of privilege, that is, to determine their synchronization strategies in advance of other smaller VSPs. We refer to a large VSP that moves first as a leader (highlighted in blue in \cref{fig-system-model}) and smaller VSPs that first observe the leader's synchronization strategies as followers.

We model the above system model as a hierarchical two-level game. The lower-level game captures UAVs' VSP selection strategies (Step \circlenum{4}) as an evolutionary game and the upper-level game models VSP synchronization strategy (Step \circlenum{2}) as a simultaneous differential game (when no leader exists) and a Stackelberg differential game (in the leader-follower case). The optimal synchronization strategy is solved by the open-loop solution grounded in optimal control theory. 
To summarize, the main contributions of the paper are as follows:
\begin{itemize}
\item 
We propose a general IoT-assisted synchronization data collection framework by leveraging movable IoT devices to collect physical objects state data. The flexibility of movable devices enables VSPs to have ad-hoc applications and faster virtual business expansion, especially critical for the budding phase of the Metaverse. 
\item Due to the self-interested nature, each device can independently select a VSP to work for. We adopt an evolutionary game to capture their VSP selection behaviors based on bounded rationality. 
\item We address the optimal synchronization strategies for VSPs by formulating it as a simultaneous differential game and solve it based on optimal control theory. In particular, we propose DT value dynamics that capture a VSP's level of tolerance to a non-updated DT. The DT value dynamics as well as the IoT devices' VSP selection dynamics are jointly considered in the system states for the optimal control problem. 
\item  We finally extend the dynamic hierarchical framework by considering leaders and followers among VSPs and formulate it as a differential Stackelberg game. Extensive experiments show that both simultaneous and Stackelberg differential game lead to higher accumulated utility for each VSP, compared to the baseline of a static game.
\end{itemize}

The rest of this paper is organized as follows. \cref{sec-related-work}
reviews related works. \cref{sec-system-model} presents the system
model and hierarchical dynamic game framework. UAVs' selection of VSPs is formulated as an evolutionary game in \cref{sec-lower-evoluitaonry-game}. In Sections \ref{sec-upper-diff-game} and \ref{sec-upper-hierachical-play}, we formulate the problem as simultaneous differential game and Stackelberg differential game respectively to solve for the optimal synchronization strategy. \cref{sec-experiments} presents numerical results and sensitivity analysis. \cref{sec-conclusion} concludes the paper.

\section{Related Work}\label{sec-related-work}
\subsection{Metaverse and its Architecture}
Recently, due to the COVID-19 pandemic restrictions and marketing by leading technology companies such as Facebook and Microsoft \cite{brownBigTechWants2021}, the convenience and usefulness of the virtual world (Metaverse) have been highlighted. Despite some early discussion of the benefits and challenges brought by virtual services in terms of user experiences \cite{gadallaMetaverseretailServiceQuality2013,volkCocreativeGameDevelopment2008,diazVirtualWorldResource2020,schroederSocialInteractionVirtual2002}, novel applications in the Metaverse have attract research attention, e.g., the authors in \cite{duanMetaverseSocialGood2021} build a university campus prototype to study social good in the Metaverse, whereas the authors in \cite{nguyenMetaChainNovelBlockchainbased2021} propose a Blockchain-based framework for Metaverse applications. An incentive mechanism design for leveraging coded distributed computing for Metaverse services is studied in \cite{jiangReliableCodedDistributed2021}. In \cite{limRealizingMetaverseEdge2022}, the interconnection of edge intelligence and the Metaverse is considered. These examples illustrate the increasing importance of realizing the Metaverse. However none of them consider how to improve the convergence between the Metaverse and the physical world, in particular, the synchronization intensity problem for VSPs in terms of their DTs.  The authors in \cite{hanDynamicResourceAllocation2021} describe an attempt to address this problem, but primarily focus on the VSP selection problem of UAVs without taking into account the synchronization intensity control and temporal values of DTs.

\begin{figure}
\centering
\includegraphics[width=0.9\linewidth]{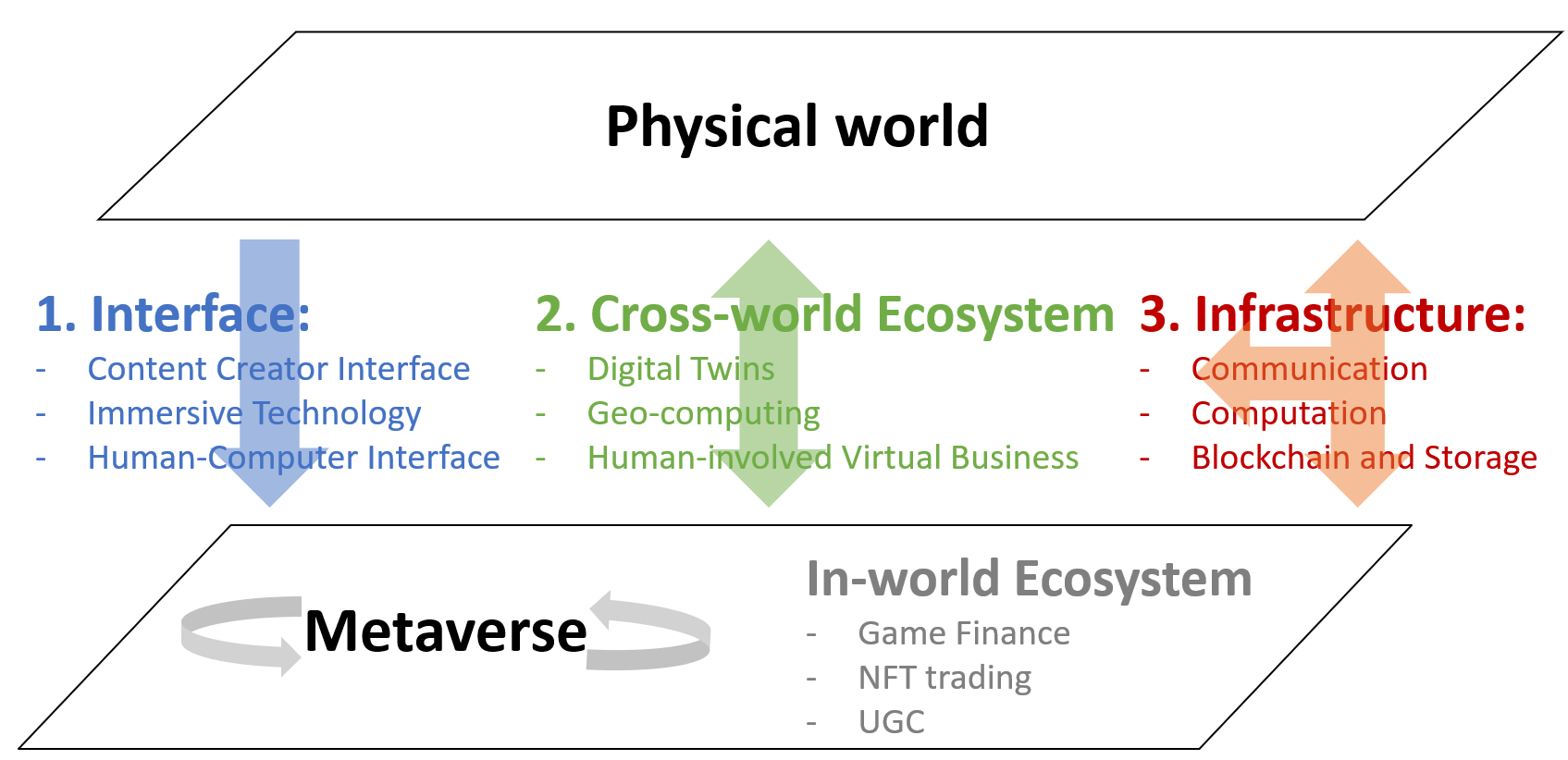}
\caption{Four components for the Metaverse \cite{hanDynamicResourceAllocation2021}}
\label{fig:arc}
\end{figure}

Regarding the Metaverse architecture, there is no consensus, e.g., a seven-layer system \cite{radoffMetaverseValueChain2021} and a three-layer architecture \cite{duanMetaverseSocialGood2021}. However, based on the Metaverse's functionality, overall, its architecture should include four aspects as shown in \cref{fig:arc}: \textit{\textbf{infrastructure}} (the fundamental resources to support the platform, such as communication\cite{luCommunicationEfficientFederatedLearning2021}, computation, blockchain\cite{yaqoobBlockchainDigitalTwins2020}, and other decentralization techniques), \textit{\textbf{interface}} (immersive technologies such as AR, VR\cite{taheriVirtualRealityDriving2017,mangianteVREdgeHow2017}, XR\cite{marrWhatExtendedRealitya}, and next generation human-brain interconnection to enrich a user's subjective sense in the virtual life), \textbf{\textit{cross-world ecosystem}} (the services that enable frequent and large amount of data transmission between the Metaverse and the physical world, to enable a \textit{convergence}  between the two worlds \cite{elsaddikDigitalTwinsConvergence2018}), and finally \textit{\textbf{in-world ecosystem}} (activities that happen only within the virtual world, e.g., transaction of the non-fungible token (NFT) \cite{mullerNFTNextStep2022}, playing games to earn Crypto (GameFi) \cite{nunleyPeoplePhilippinesAre2021}, and decentralized finance (DeFi) \cite{chenBlockchainDisruptionDecentralized2020}). See \cite{dionisio3DVirtualWorlds2013,monetaArchitectureHeritageMetaverse2020,leeAllOneNeeds2021} for more detailed discussions of the architectures and challenges faced by the Metaverse. Among these four aspects, this paper is primarily focused on the cross-world system for the Metaverse, i.e., striving for a convergence of the two worlds.

\subsection{Digital Twins (DTs) in the era of Metaverse}
DTs is not a new concept. They were proposed in 2003 \cite{grievesDigitalTwinManufacturing2014} by Grieves in his course on \enquote{product life cycle management}, which defines DTs by physical product, virtual product, and their connections. 
Later in 2012, the National Aeronautics and Space Administration (NASA) defined DTs as ``integrated multiphysical, multiscale, probabilistic simulations of an as-built vehicle or system using the best available physical models, sensor updates, and historical data'' \cite{glaessgenDigitalTwinParadigm2012}. Although these are early definitions, it is clear that a key feature of DTs is its \textit{mirroring} of physical objects, which means that \textit{continuous updates} from physical space to virtual space are needed (physical$\rightarrow$virtual), as a physical object's state changes over time. 

There is no consensus regarding the {updates} from DTs to physical objects (virtual $\rightarrow$ physical). On the one hand, the authors in \cite{fullerDigitalTwinEnabling2020,taoDigitalTwinIndustry2019,khanDigitalTwinEnabled6GVision2021} believe that there are \textit{bi-directional} updates across cyberspace and physical space for DTs (virtual $\leftrightarrow$ physical). To emphasize the bi-directional updates, the authors in \cite{fullerDigitalTwinEnabling2020} use \textit{digital models} to refer to the case where there is no update across the two spaces and use \textit{digital shadows} to refer to the case where there is a \textit{one-directional} update from physical space to the cyberspace, i.e., a change in the state of the physical object leads to a change in the digital object and \textit{NOT} vice versa. On the other hand, several other works use DTs as simulation-based only, such as \cite{grievesDigitalTwinManufacturing2014,gaborSimulationBasedArchitectureSmart2016,weyerFutureModelingSimulation2016}. That is, they treat DTs as digital shadows only (i.e., physical$\rightarrow$virtual). Meanwhile, the control of physical assets via their DTs (physical$\leftarrow$virtual) are particularly studied around another concept called cyber-physical systems (CPS) \cite{CyberPhysicalSystemsCPS}. CPS can be leveraged to support large distributed control e.g., automated traffic control and ubiquitous healthcare monitoring and delivery. 

Given the ambiguity of definitions of DTs, a number of papers have already called for  more precise clarity on the difference between the DTs, the digital shadows, and the CPS \cite{fullerDigitalTwinEnabling2020,khanDigitalTwinEnabled6GVision2021}. In this paper, we just use the term DTs to refer to the \textit{digital existence of the Metaverse}, which is \rom{1} a digital {replication} of a physical object and \rom{2} has \textit{at least} one directional information transfer from the physical world to cyberspace.
We do not distinguish whether or not there is information flow sending from a DT to a physical entity, and we leave it for each VSP to decide, based on its virtual service. For example, if virtual services are to create  simulated environments for Metaverse users to experience, such as virtual sightseeing, then bi-direction updates across worlds may not be needed; however, for an online traffic optimization service, then the control messages sent from the DTs (e.g., traffic lights) to its physical objects are important. 

Compared to smart manufacturing or Industry 4.0 in which DTs are created and studied merely for a particular type of application \cite{taoDigitalTwinIndustry2019}, the Metaverse, with a platform feature, can support different kinds of virtual business\cite{leeAllOneNeeds2021}. In addition, many brand new application scenarios are expected to emerge in the Metaverse, due to the interoperability \cite{radoffWeb3InteroperabilityMetaverse2021} (such as sharing of DTs among VSPs), human experience being important in the virtual services \cite{ARVRTechnologies2022}, and the maturity of blockchain and artificial intelligence \cite{nguyenMetaChainNovelBlockchainbased2021}. For example, the combination of NFT and DTs is considered to be the next step of NFT\cite{mullerNFTNextStep2022}, in which the value of a NFT is attached to a physical asset and the change of the asset's state affects the value of the NFT. Therefore, given a wide application of DTs in the era of the Metaverse and the necessity to synchronize the states of DTs and their physical counterparts, our generalized system model constitutes a novel approach to address synchronization issues for DTs in the Metaverse with the assistance of IoT devices for sensing the states of real entities. 

\section{System Model and Problem Formulation}\label{sec-system-model}
We consider a network that consists of \rom{1} $N$ edge devices (e.g., UAVs)  represented by the set $\mcal{N}=\{1,\ldots,n,\ldots,N\}$ and \rom{2} $M$ virtual service providers (VSPs) represented by the set $\mcal{M}=\{1,\ldots,m,\ldots,M\}$, as shown in \cref{fig-system-model}. An element of the set $\mcal{M}$ and $\mcal{N}$ is represented by $m$ and $n$, respectively. We consider UAVs as the IoT devices hereafter. 

Each of VSPs uses a set of DTs that are critical to its virtual business profit. For example, a fresh and up-to-date information of the cars on the road (traffic twin) can be critical for a city government, logistics firm, or smart goods delivery company, so that those VSPs can have correct and reliable simulation in the Metaverse to determine the optimization intervention. As time passes, DTs decline in value (e.g., due to a decrease in reliability) and as a result, the virtual business profits generated by the DTs may decrease. 

Let $\theta_m>0$ represent the value decay rate (e.g., reliability) of the DTs of VSP $m$.  We consider that the different VSPs may have different DT value decay rates, possibly due to, e.g., the types of their virtual services, computational capabilities, and the extents to which they use AI. For example, a VSP with a stronger AI algorithm and higher computation capacity may have lower value decay rates, since with the adoption of AI, VSPs can study the historical state data obtained from the physical counterparts and adjust the status of DTs based on the predictable pattern trained with AI, thereby maintaining the reliability of the DTs \cite{wangUnmannedAircraftSystem2021}. 

Let $z_m(t)\geq 0$ denote the values of the DTs for VSP $m$ at time instant $t$. We model the rate of change for the DT values, which is the first order time derivative $\dot{z}_m(t) =\rm{d}z(t)/\rm{d} t$ as follows:
\begin{equation}\label{eq-value-dynamics}
\dot{z}_m(t) =\eta_m(t) - \theta_m z_m(t), \quad m\in\mcal{M},
\end{equation}
where $\eta_m(t)$ denotes the intensity, or rate, at which the synchronization activities are carried at time $t$. One can interpret the DT value dynamics in \eqref{eq-value-dynamics} as follows: if VSP $m$ determines not to synchronize DTs at all, i.e., $\eta_m(t)\equiv 0$, then the value of DTs deteriorates at the (time independent) rate $\theta_m$. By using a positive rate of synchronization, i.e., $\eta_m(t)>0$, the VSP can slow down, or even reverse, the process of deterioration of its DTs. To simplify, we use $\bm{z}(t)=[z_m(t)]_{m\in\mcal{M}}$ to denote the vector of DT values of all VSPs and let the vector $\bm{\eta}(t)=[\eta_m(t)]_{m\in\mcal{M}}$ denote the synchronization strategies of all VSPs at each instant $t$. 

To synchronize DTs with their real counterparts, the states data of the real counterparts (hereinafter referred to synchronization data) are needed. Here, we consider a set of $N$ UAVs can be motivated to assist VSPs synchronization tasks by sensing the corresponding real entities states for the VSPs. Here, for simplicity, we consider a group of UAVs with the same type, e.g., characterized by the same sensing quality and unit energy cost \cite{shakeriDesignChallengesMultiUAV2019}. The extension to the scenario with heterogeneous types of UAVs is straightforward, as the set of UAVs can always be partitioned into multiple sub-populations so that the UAVs within a sub-population are of the same type. 

Based on synchronization requests from $M$ VSPs, each of the $N$ UAVs can select a VSP to work for. UAVs that select the same VSP are expected to collectively sense the real entities of interest to the VSP and share the incentives provided by the VSP. It is expected that when VSP $m$'s synchronization intensity increases, VSP $m$ will allocate more incentives to motivate UAVs to work for it. As such, the total incentive pool from VSP $m$ should be positively correlated with the synchronization intensity chosen by VSP $m$.

In summary, the synchronization intensity $\eta_m(t)$ affects the values of its own DTs as well as the percentage of UAVs that sense the physical twins to provide synchronization data (UAV's VSP selection distribution). Since the synchronization intensity is controlled by the VSP, we also refer to it as \textit{a control variable, a control, or a strategy} for a VSP, which is a function of the time $t$. By determining an optimal control strategy on the synchronization intensity, the VSP can affect the states of the system, including DT value states in addition to the UAV's VSP selection strategy, thereby optimizing the utility for the VSP.

To identify the optimal control strategy of the synchronization intensity and its associated DT values and the UAV's VSP selections, we focus our study on a hierarchical game formulation as follows:
\begin{itemize}
\item \textit{Lower-level Evolutionary Game}: At the lower level, we study VSP selection strategies for UAVs. Every UAV is considered to have bounded rationality, that is, to select a strategy that is satisfactory rather than optimal \cite{weibullEvolutionaryGameTheory1997}. This is to counter the case that UAV decisions are sub-optimal with the potentially incomplete information of the game (e.g., the payoffs received by other UAVs), especially when the number of UAVs is large. In this regard, we adopt and formulate the evolutionary game model in \cref{sec-lower-evoluitaonry-game} to model the strategy adaptation process of the UAVs.
\item \textit{Upper-level (Stackelberg) Differential Game} : At the upper-level, we study the optimal synchronization strategies of the VSPs. As the synchronization intensity jointly affects the synchronization data supply and the value status of the DTs, the VSPs need to devise the optimal control strategy about synchronization intensity so that the accumulated utilities discounted at the present time are maximized. We adopt a differential game approach to solve the problem in \cref{sec-upper-diff-game}. Additionally, when there is an influential VSP in the Metaverse market that has the privilege of determining strategy first, we adopt the Stackelberg differential game to solve the problem, so as to practicalize our solution. 
\end{itemize}

The notations used in the paper are presented in \cref{tab:my-table}. Arguments to a function, e.g., time variable $t$, may be omitted from expressions when there is no ambiguity.

\begin{table}[t!] 
\begin{threeparttable}
\caption{Notations used in the system model}
\label{tab:my-table}
\rowcolors{2}{gray!10}{white}
\begin{tabular}{p{1.2cm}| p{6.9cm}}
\hline
\rowcolor{gray!30}
\textbf{{Notation}} & \textbf{{Description}}   \\ 
\hline
&\cref{sec-system-model} \\
$m\in\mcal{M}$ & index of a VSP, $|\mcal{M}|=M$\\
$n\in\mcal{N}$ & index of a UAV, $|\mcal{N}|=N$  \\
$\eta_m(t),\bm{\eta}$ & synchronization intensity of VSP $m$, vector of $[\eta_m]_{m\in\mcal{M}}$ \\
$z_m(t), \bm{z}$ & values of DTs of VSP $m$, vector of $[z_m]_{m\in\mcal{M}}$\\
$\theta_m$ & time-independent value decay parameters for VSP $m$\\
$\mcal{T}$ & Time horizon $\mcal{T}=[0,T]$ \\ 
\hline
& \cref{sec-lower-evoluitaonry-game} \\
$x_m(t)$ & percentage of UAVs selecting VSP $m$ at time $t$\\
$\bm{x}$ &vector of $[x_m]_{m\in\mcal{M}}$\\
$u_m(\bm{x},\eta_m)$& utility of UAVs selecting VSP $m$\\
$d_m$ & number of DTs for VSP $m$\\
$g(\cdot)$ &weighting function used in the incentive pool\\
$R_m(t)$ & incentive pool for the VSP $m$ \\
$c_m$ & cost of UAVs in assisting VSP $m$ for synchronization \\
$\bar{u}$ & average utility of the UAVs\\
$\delta$ & learning rate \\
\hline
&\cref{sec-upper-diff-game,sec-upper-hierachical-play} \\
$b$ & average data contribution from each UAVs\\
$\alpha_m$ & unit data price for VSP $m$\\
$\beta_m$ & VSP preference of a unit increase in DTs values\\ 
$v_m$ & VSP preference of the values of the DTs\\ 
$k_m$&average data size request rate\\
$\omega_m^i$ & weights in $J_m$, $i=1,2,3,4$\\ 
$\rho$ & discounting factor\\
$J_m$& instantaneous utility function of VSP $m$\\
$J_m^i$& components for the $J_m$, $i=1,2,3,4$ \\
$\mathfrak{J}_m$ & the objective function for the VSP $m$\\
$H_m$ &Hamiltonian function of VSP $m$\\
$H_m^*$ &Maximized Hamiltonian function of VSP $m$\\
\hline
\hline
\end{tabular}
\end{threeparttable}
\end{table}

\section{Lower-level Evolutionary Game} \label{sec-lower-evoluitaonry-game}
In this section, we adopt the evolutionary game to capture the bounded rationality and dynamics of the UAVs' VSP selections. In \cref{sec-sub-lower-population-formulation}, we introduce the UAV owner's utility model and replicator dynamics that characterize the evolution of their VSP selection strategies. We prove the existence, uniqueness, and stability of the lower-level evolutionary game in \cref{sub-equilbrium-analysis}.

\subsection{UAVs Population Formulation} \label{sec-sub-lower-population-formulation}
Evolutionary game is formulated based on a set of populations with evolutionary dynamics. The various aspects of the evolutionary game are as follows:

\begin{itemize}
\item \textit{Players and Populations: } 
Each UAV $n\in \mcal{N}$ is a player of the evolutionary game. In addition, the set $\mcal{N}$ is referred to as the population of players. 

\item
\textit{Strategy: } The VSP selection is a pure strategy that can be implemented by the player. 

\item
\textit{Population States: }
Population states is the strategy distribution for the population, denoted by a vector $\bm{x}(t)=[x_m(t)]_{m\in\mcal{M}}$. The component $x_m(t)$ denotes the percentage of UAVs in the population that select VSP $m$ at the time instant $t$. As the population states are subject to $\sum_{m\in\mcal{M}} x_m(t)=1$ and therefore the state space, i.e., the set of all the possible population states, is a unit simplex $\Delta \in \bb{R}^{M-1}$.

\item
\textit{Utility Functions: } Utility function $u_m(\bm{x}(t),\eta_m)$ describes the utility that a UAV can receive given the population states $\bm{x}(t)$ and the synchronization strategy $\eta_m$. 
\end{itemize}

Each VSP $m$ has $d_m$ DTs and chooses the synchronization intensity as $\eta_m(t)$, where $\eta_m(t)\geq 0$, at time $t$. Let the time horizon for the analysis be defined as $\mcal{T}=[0,T]$. 
We refer to $\bm{\eta}(t)$, $t\in\mcal{T}$, as the control path over the time horizon $\mcal{T}$. We omit time variable $t$ for simplicity if there is no confusion. Note that determination of the optimal synchronization strategy path over $\mcal{T}$ can be further explored as an optimal control problem, which is to maximize the present value of the accumulated utility that the VSP can obtain. The extension is in \cref{sec-upper-diff-game}. In this section, we assume that the optimal synchronization intensity path has been determined by the VSPs. 

For VSP $m$, the incentive pool that it allocates to its associated UAVs is given as follows:
\begin{equation}\label{eq-incentive-pool}
R_m(t)= {\eta_m(t) d_m g(\theta_m)},
\end{equation}
where $g(\cdot)$ is a monotonically increasing function representing the weight affected by the value decay rate $\theta_m$. One can interpret \eqref{eq-incentive-pool} as both synchronization rate $\eta_m$ and the number of DTs $d_m$ have a positive correlation with the incentive pool. That is, the higher the synchronization intensity or the number of DTs that a VSP has, the more incentives the VSP should offer UAVs. Similarly, the higher the decay rate $\theta_m$ is, the more incentives a VSP should offer. We consider an affine mapping for $g(\cdot)$, e.g., $g(\theta_m)=g_0+g_1\theta_m$, where $g_1$ is a positive number to represent such a positive correlation between the decay rate and the incentive. 

Given the population states $\bm{x}$, there are $N x_m$ UAVs that select VSP $m$ to sense the data and assist its synchronization tasks. With a uniform incentive sharing scheme \cite{hanDynamicResourceAllocation2021}, each UAV can obtain the incentives with amount $\frac{R_m(t)}{Nx_m(t)}$. Let $c_m$ represent the energy cost incurred by the sensing task for VSP $m$, e.g., UAV's energy cost flying from the base to the target region and the communication cost \cite{limFederatedLearningUAVEnabled2021}. The utility received by a UAV that select VSP $m$ is as follows:
\begin{equation}\label{key}
u_m(\bm{x},\eta_m) =\frac{R_m(t)}{Nx_m(t)}-c_m=  \frac{\eta_m(t) d_m (1+\theta_m)}{N x_m(t)} - c_m.
\end{equation}

The utility information for selecting each of the VSPs at the current time $t$ can be exchanged among UAVs, e.g., at their base or device-to-device (D2D) communication in the air. As such, the UAV may adjust its VSP selection strategy at time $t+1$. The evolutionary process of the VSP selection strategy can be modeled by \textit{replicator dynamics} \cite{weibullEvolutionaryGameTheory1997}, which is a set of ordinary differential equations, given as follows:
\begin{equation}\label{eq-replicator-dynamics}
\dot{x}_{m}  =  \delta x_{m} (u_m - \bar{u}), \quad m\in\mcal{M},
\end{equation}
where $\delta$ is the learning rate of the UAVs, $\dot{x}_{m}:=dx_{m}(t)/dt$, and $\bar{u}:=\sum_{m\in\mcal{M}}x_m u_m$ denotes the average utility that a UAV population can have. Again, we omit arguments in $u_m(\bm{x}(t),\eta_m)$, and $x_m(t)$ for simplicity.

It can be seen from \eqref{eq-replicator-dynamics} that the population state $\bm{x}_m(t)$ evolves when the payoff received by a UAV is different from the population average utility. If the reward received by a device that selects VSP $m$ is higher than the average utility, i.e., $u_m>\bar{u}$, then the population state $x_{m}(t)$ increases since more UAVs adapt their strategies and select VSP $m$, i.e., $\dot{x}_{m} >0$. The evolutionary process stops when $\dot{x}_{m}=0$ for all $m\in\mcal{M}$. This is called the \textit{stationary state} or \textit{evolutionary equilibrium} (EE). The stationary states can be achieved by either $x_{m}=0$ for all VSP except one or $u_{m}=\bar{u}$, $\forall u\in\mcal{M}$. The former leads to a set of \textit{boundary stationary points}\cite{weibullEvolutionaryGameTheory1997,hanOpportunisticCodedDistributed2021} lying on a vertex of $\Delta$  and the latter leads to a set of \textit{interior stationary points}. Next, we prove that the EE uniquely exists for the lower-level evolutionary game, and is evolutionarily robust as well.

\subsection{Existence, Uniqueness, and Stability of EE }\label{sub-equilbrium-analysis}

Since the payoff is affected by the synchronization strategy $\eta_m(t)$ adopted by the VSP $m$, the evolution of a population state described by the replicator dynamics is subject to the controls of VSPs. The following theorem discusses existence and uniqueness of the solution under these controls.

\begin{theorem}\label{theorem1}
For the dynamical system defined in \cref{eq-replicator-dynamics} with initial condition $\bm{x}=\bm{x}_0$,
there exists a unique solution $\bm{x}(t)$ defined for all $t\in[0,T]$.
\end{theorem}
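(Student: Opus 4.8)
The plan is to exploit a cancellation that removes the apparent singularity in the vector field. Although the per-UAV utility $u_m$ in \eqref{key} blows up like $1/x_m$ as $x_m\to 0$, the product $x_m u_m$ that actually enters the replicator dynamics \eqref{eq-replicator-dynamics} is regular. First I would substitute \eqref{key} into \eqref{eq-replicator-dynamics} and expand $\dot{x}_m=\delta(x_m u_m - x_m\bar{u})$. Using $x_m u_m=\eta_m(t)d_m(1+\theta_m)/N - c_m x_m$ and $\bar{u}=\sum_{k\in\mcal{M}}\big[\eta_k(t)d_k(1+\theta_k)/N - c_k x_k\big]$, the right-hand side becomes a quadratic polynomial in $\bm{x}$ whose coefficients depend continuously on $t$ through the control path $\bm{\eta}(t)$. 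Denoting this vector field $\bm{F}(\bm{x},t)$, the $1/x_m$ term has disappeared, so $\bm{F}$ is defined and smooth on all of $\bb{R}^M$ rather than only on the open positive orthant.

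Second, I would invoke the standard existence-and-uniqueness machinery for this regularized field. Since $[0,T]$ is compact and $\bm{\eta}$ is a bounded control path, the coefficients of $\bm{F}$ are bounded on $[0,T]$, so $\bm{F}(\bm{x},t)$ is continuous in $t$ and locally Lipschitz in $\bm{x}$ uniformly on compact sets. The Picard--Lindel\"of theorem then yields a unique solution on a maximal interval $[0,\tau)$ with $\tau\le T$. If one prefers to allow the optimal control to be merely measurable and bounded rather than continuous, the same local conclusion follows from Carath\'eodory's existence theorem.

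Third, to promote local existence to existence on the entire horizon $[0,T]$, I would show that the simplex $\Delta$ is forward-invariant. Summing the equations gives $\sum_{m\in\mcal{M}}\dot{x}_m=\delta\,\bar{u}\,(1-\sum_{m\in\mcal{M}}x_m)$, so the affine hyperplane $\{\sum_m x_m=1\}$ is invariant and the normalization is preserved. For non-negativity, evaluating the field on the face $\{x_m=0\}$ yields $\dot{x}_m=\delta\,\eta_m(t)d_m(1+\theta_m)/N\ge 0$, so $\bm{F}$ points into $\Delta$ along every boundary face; by a Nagumo-type sub-tangentiality argument a trajectory started in $\Delta$ never leaves it. Because $\Delta$ is compact and invariant, the solution stays bounded and cannot escape in finite time, so the maximal interval is all of $[0,T]$, and uniqueness is inherited from the local result. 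Finally, since the boundary is repelling, $x_m>0$ holds along the trajectory, so $u_m$ is finite there and the original form \eqref{eq-replicator-dynamics} is genuinely satisfied. I expect the singularity of $u_m$ to be the only real obstacle, and it is dissolved entirely by the $x_m u_m$ cancellation; the one remaining subtlety is the regularity of $\bm{\eta}(t)$, which we handle by treating it as a given bounded (continuous) control path.
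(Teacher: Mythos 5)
Your proposal is correct, and it reaches the theorem by a genuinely different (and in one respect more careful) route than the paper. The paper fixes the control path, writes the right-hand side as $\tilde{f}_m(\bm{x},t)$, and tries to bound $u_m$ and its partial derivatives directly so as to extract a \emph{global} Lipschitz constant via the Mean Value Theorem, after which global existence--uniqueness follows in one shot. That argument is fragile exactly where you identified the obstacle: $u_m=\eta_m d_m(1+\theta_m)/(Nx_m)-c_m$ blows up as $x_m\to 0^+$, so the paper's claim that $u_m$ is bounded on the state space (patched by the ad hoc convention $u_m\equiv 0$ at $x_m=0$, which in fact makes the field discontinuous at the boundary, since the interior limit of $x_m(u_m-\bar u)$ is $\delta\eta_m d_m(1+\theta_m)/N\neq 0$) does not hold as stated. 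Your cancellation $x_m u_m=\eta_m(t)d_m(1+\theta_m)/N-c_m x_m$ dissolves this cleanly: the replicator field becomes a quadratic polynomial in $\bm{x}$ with coefficients continuous (or merely measurable, via Carath\'eodory) in $t$, so local Lipschitz continuity is immediate, and you then supply the forward-invariance argument for $\Delta$ (invariance of the hyperplane $\sum_m x_m=1$ plus inward-pointing field on each face, by Nagumo) that the paper's ``global Lipschitz'' shortcut silently replaces; compactness of $\Delta$ then rules out finite-time escape and yields existence on all of $[0,T]$. What each approach buys: the paper's, were its boundedness claims valid, avoids any invariance analysis; yours is more elementary and robust, extends to measurable controls, and actually repairs the gap at the boundary. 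One caveat worth flagging explicitly: on the faces $\{x_m=0\}$ your regularized (continuously extended) field differs from the paper's convention --- under $u_m\equiv 0$ the faces are invariant, whereas your field points inward when $\eta_m(t)>0$ --- but for interior initial data the two dynamics coincide along the trajectory, since, as you note, positivity is preserved (when $\eta_m>0$ the boundary repels, and when $\eta_m=0$ the equation for $x_m$ is linear in $x_m$ so $x_m$ cannot reach zero in finite time by Gr\"onwall), so the theorem as stated is unaffected for $\bm{x}_0$ in the interior of $\Delta$.
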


\begin{proof}
Let the right hand side of \eqref{eq-replicator-dynamics} denote by $f_{m}(\bm{x},\bm{\eta})$. For fixed control path $\bm{\eta}(t)$, let $\tilde{f}_{m}(\bm{x},t) := f_{m}(\bm{x},\bm{\eta} )$. Then
the differential equation \eqref{eq-replicator-dynamics} reduces to the ordinary differential equation
\begin{equation}\label{key}
\dot{x}_{m}(t)  = \tilde{f}_{m}(\bm{x},t),
\quad \forall m\in\mcal{M}.
\end{equation}
First, $\tf_m$ is continuous when $\bm{\eta}(t)$ is continuous. Next, for a fixed time $t$ and given control sequence $\bm{\eta}(t)$, the utility $u_m(t)$ is bounded, achieving maximum value when $x_m=1/d_m$ and the minimum value when $x_m=1$ for $x_m>0$. If $x_m=0$, $u_m\equiv0$ and is bounded as well. Therefore the partial derivative of $u_{m}$ w.r.t $x_{m}$, $\frac{\partial u_m}{\partial x_m}$, is bounded and thus $\frac{\partial u_m}{\partial x_q}=0$ for $q\neq m$. Therefore, $\frac{\partial \bar u}{\partial x_m} = u_m + x_m\frac{\partial u_m}{x_m}$ is bounded as well. Next, we can show that the partial derivatives of $\tf_m$ w.r.t. $x_q$ for $q\neq m$ is bounded, since $\frac{\partial \tf_m}{\partial x_q} =-\delta x_m[u_q+x_q\frac{\partial u_q}{\partial x_q}]$ and partial derivatives of $\tf_m$ w.r.t. $x_m$ is bounded given that $\frac{\partial \tf_m}{\partial x_q} =\delta(u_m-\bar{u})+\delta  x_m[\frac{\partial u_m}{\partial x_m} - \frac{\partial \bar{u}}{\partial x_m}]$. Thus $\bm{\tf}$ is bounded for all $(\bm{x},t) \in \Delta \times \bb{R}$, i.e., the Cartesian product of the state space and the one-dimensional real space.
Therefore, by the Mean Value Theorem, it can be proved that $\left| \tf_{m}(\bm{x},t) -  \tf_{m}(\bm{y},t) \right| / \left| \bm{x} - \bm{y} \right| $ is bounded for all $t\in\bb{R}$, which implies that $\tf_{m}(\bm{x},t)$ satisfies the global Lipschitz condition  \cite{engwerdaLQDynamicOptimization} and consequently a unique solution to the dynamical system exists globally.
\end{proof}

After proving that the solution $\bm{x}(t)$ exists given the initial state $\bm{x}_0$, conditioned on any control from VSPs, we now show that the solution is \textit{asymptotically stable}.

A state $\bm{x}$ is \textit{Lyapunov stable} means if no small perturbation of the state induces a movement away from $\bm{x}$.
A state $\bm{x}\in\Delta$  is \textit{asymptotically stable} if it is Lyapunov stable and all sufficiently small perturbations of
the state induce a movement back towards $\bm{x}$.  It is not hard to prove that the boundary stationary points are not stable. Therefore, we only consider interior stationary points.

\begin{theorem}
For a single UAV population, the interior stationary points of the dynamical system \eqref{eq-replicator-dynamics} are asymptotically stable.
\end{theorem}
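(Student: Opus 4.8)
The plan is to establish asymptotic stability by Lyapunov's direct method, exploiting the special algebraic form of the utilities. Writing $A_m := \eta_m d_m(1+\theta_m)/N$, the utility model reads $u_m(\bm{x}) = A_m/x_m - c_m$, with $A_m>0$ at any interior point (where necessarily $\eta_m>0$, otherwise $x_m^*$ could not be positive). The decisive structural observation is that the cost $c_m$ is independent of $\bm{x}$, so it cancels in every payoff difference, $u_m(\bm{y})-u_m(\bm{x})=A_m\bigl(1/y_m-1/x_m\bigr)$. This makes the population game \emph{strictly stable (contractive)}: for interior $\bm{x}\neq\bm{y}$,
\[
\sum_{m\in\mcal{M}}(y_m-x_m)\bigl(u_m(\bm{y})-u_m(\bm{x})\bigr) = -\sum_{m\in\mcal{M}} A_m\frac{(y_m-x_m)^2}{x_m y_m} < 0,
\]
since each term is a negative multiple of a square. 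Interior equilibria of strictly stable games are asymptotically stable under the replicator dynamics, with the relative entropy serving as Lyapunov function; I would make this concrete for our particular $u_m$.

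Let $\bm{x}^*$ denote the (unique) interior stationary point obtained earlier, so that $u_m(\bm{x}^*)=\bar{u}^*$ is the same constant for every $m$. I would take the candidate
\[
V(\bm{x}) = \sum_{m\in\mcal{M}} x_m^*\ln\frac{x_m^*}{x_m},
\]
which by Gibbs' inequality is nonnegative on the relative interior of $\Delta$ and vanishes only at $\bm{x}=\bm{x}^*$, hence has a strict minimum there. Differentiating along \eqref{eq-replicator-dynamics} and using $\sum_{m}x_m^*=1$ gives
\[
\dot{V} = -\sum_{m\in\mcal{M}} x_m^*\frac{\dot{x}_m}{x_m} = -\delta\sum_{m\in\mcal{M}} x_m^*(u_m-\bar{u}) = -\delta\sum_{m\in\mcal{M}}(x_m^*-x_m)\,u_m(\bm{x}).
\]
Because $u_m(\bm{x}^*)$ is constant in $m$ and $\sum_{m}(x_m^*-x_m)=0$, I may subtract it freely, rewriting the right-hand side as $-\delta\sum_{m}(x_m^*-x_m)\bigl(u_m(\bm{x})-u_m(\bm{x}^*)\bigr)$, which by the contractivity identity collapses to the closed form
\[
\dot{V} = -\delta\sum_{m\in\mcal{M}} A_m\frac{(x_m^*-x_m)^2}{x_m\,x_m^*} \le 0,
\]
with equality only at $\bm{x}=\bm{x}^*$. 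Lyapunov's theorem then delivers asymptotic stability on $\Delta$.

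The routine parts are the differentiation of $V$ and the algebra producing the closed form of $\dot{V}$. The real content — the step I would flag as the crux — is recognizing the strictly-stable (equivalently, interior-ESS) structure, namely that the $\bm{x}$-independent costs $c_m$ drop out of all payoff differences so that $\dot{V}$ reduces to a manifestly negative sum of squares; this is exactly what allows a single Lyapunov function to cover arbitrary $M$ and avoids a cumbersome eigenvalue analysis of the replicator Jacobian restricted to the tangent space $\{\sum_m\xi_m=0\}$. Two technical points deserve care: $V$ is defined only on the relative interior of $\Delta$ (where each $\ln x_m$ is finite), which matches the theorem's restriction to interior stationary points; and one must confirm $A_m>0$ at the rest point, which holds because an interior equilibrium requires every strategy to be played, forcing $\eta_m>0$. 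An equivalent linearization route — verifying the Jacobian of the replicator field is negative definite on the simplex tangent space at $\bm{x}^*$ — is available, but I would prefer the Lyapunov argument for its brevity and its global flavor.
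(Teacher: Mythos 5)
Your proof is correct, but it takes a genuinely different route from the paper's. The paper also invokes Lyapunov's direct method, yet with per-coordinate quadratic error functions $V_m = e_m^2/2$, $e_m := x_m^* - x_m$, computing $\dot{V}_m = -e_m\,\delta\, x_m (u_m - \bar{u})$ and then arguing by signs: when $u_m > \bar{u}$ the share $x_m$ increases and (the paper asserts) $e_m > 0$, hence $\dot{V}_m < 0$, and symmetrically for $u_m < \bar{u}$. That sign step is loose as written --- $\dot{x}_m > 0$ does not by itself yield $x_m < x_m^*$; making it rigorous requires precisely the structural fact you isolated, namely that $u_m = A_m/x_m - c_m$ is strictly decreasing in the own share, and even then the coupling of all coordinates through $\bar{u}$ is left unaddressed in the paper's coordinatewise treatment. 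Your route --- the relative-entropy function $V(\bm{x}) = \sum_m x_m^* \ln (x_m^*/x_m)$ combined with the contractivity identity $\sum_m (y_m - x_m)\bigl(u_m(\bm{y}) - u_m(\bm{x})\bigr) = -\sum_m A_m (y_m - x_m)^2/(x_m y_m)$ --- handles that coupling automatically (the $\bar{u}$ terms cancel since $\sum_m (x_m^* - x_m) = 0$), and your algebra checks out: $\dot{V} = -\delta \sum_m (x_m^* - x_m)\, u_m(\bm{x}) = -\delta \sum_m A_m (x_m^* - x_m)^2/(x_m x_m^*) \le 0$, vanishing only at $\bm{x}^*$. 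What your approach buys beyond the paper's: a single Lyapunov function valid for arbitrary $M$ with a manifestly negative-definite derivative, attraction from the entire relative interior of $\Delta$ rather than a merely local claim, and uniqueness of the interior equilibrium as a by-product of strict stability; what the paper's buys is only brevity and avoidance of the entropy machinery, at the cost of rigor. You also correctly flag the shared caveat that both arguments freeze the controls $\eta_m$ (hence $A_m > 0$ at an interior rest point) so that stationarity is well defined.
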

\begin{proof}
Based on the Lyapunov direct method \cite{weibullEvolutionaryGameTheory1997}, we need to
prove the time derivative of the Lyapunov function is strictly
negative. Let $\bm{x}^*=[x_m]_{m\in\mcal{M}}$ denote the interior evolutionary
equilibrium and let $e_{m}(t):=x_{m}^*(t)-x_{m}(t)$ be the error function over time $t$. Define the Lyapunov function $V_{m}: T\rightarrow \bm{R}_{+}$ with $V_{m}=e_m^2(t)/2 $, then the time derivative of $V_{m}$ is
\begin{equation}\label{key}
\dot{V}_{m}=-e_m\delta x_m(u_m-\bar{u}).
\end{equation}
When $u_m>\bar{u}$, the population ratio $x_m$ increases, and therefore $e_m(t)>0$.  As $x_e\neq 0$, we have $\dot{V}_e<0$. When $u_m<\bar{u}$, the population ratio $x_m$ decreases, and therefore $x_m(t)<0$, which also implies $\dot{V}_e<0$. Based on the Lyapunov stability theory, the interior stationary point is asymptotically stable.
\end{proof}
See \cite{niyatoDynamicsNetworkSelection2009} for the details of the implementation of an algorithm to find the ESS with the evolutionary dynamics.

\section{Upper-level differential game for simultaneous decision-making  VSPs}\label{sec-upper-diff-game}
In the upper-level, we need to solve the optimal control problem regarding the synchronization intensity given the population state dynamics in \eqref{eq-replicator-dynamics} and the value dynamics in \eqref{eq-value-dynamics}. In this section, we consider the problem in which all VSPs are simultaneous decision makers in the game, and formulate the problem as a simultaneous differential game in \cref{sim-sub-move}. We adopt the open-loop Nash equilibrium as the solution to this game in \cref{sec-sub-open-loop-for-simudiff}.

\subsection{Problem Formulation for the Simultaneous Move}\label{sim-sub-move}
We consider that the $M$ VSPs choose their synchronization strategies at the same time, and each player competes to maximize the objective functional $\mathfrak{J}_m$, i.e., the present value of utility derived over a finite or infinite time horizon, by designing a synchronization strategy $\eta_m$ that is under the VSP's control. The choice of synchronization intensity by a player, say VSP $m$, influences \rom{1} the evolution of the UAV population states $\bm{x}(t)$, \rom{2} the value states of the DT $z_m(t)$, and \rom{3} the objective functional of the other VSPs in the set $\mcal{M}$, i.e., $\mathfrak{J}_{m'}$, $m' = 1,2,\ldots,m-1,m+1,\ldots, M$. The influence to \rom{1} and \rom{2} is captured via a set of differential equations (the system dynamics). The derivation of $\mathfrak{J}_{m}$ is given as follows.

With $x_m(t)N$ UAVs assisting VSP $m$ in sensing the current states of its real world twins and $z_m(t)$ being the current values of its DTs, the current utility rate  $J_m$ at time $t$ for VSP $m$ can be described as follows: 
\begin{equation}
J_m(\bm{x},\bm{z},\bm{\eta},t)=\omega_m^1J_m^1+\omega_m^2J_m^2-\omega_m^3J_m^3-\omega_m^4J_m^4 \label{Jm},
\end{equation}
where
\begin{align}
&J_m^1=x_mNb \alpha_m ,\quad J_m^2=\beta_m z_m d_m \label{J1J2}, \\
&J_m^3=(z_m-v_m)^2, \quad  J_m^4=(x_mNb-\eta_m d_m k_m)^2 \label{J3J4}.
\end{align}
That is, $J_m$ can be interpreted as the weighted sum of four utility components $J_m^1,J_m^2,J_m^3$, and $J_m^4$, in which $J_m^1$ and $J_m^2$ are the positive utility, and $J_m^3$ and $J_m^4$ are the disutility. $\omega_m^i\geq 0,i=\{1,2,3,4\}$ are the weight parameters to form the objective function $J_m$. We explain the utility component as follows:

\begin{itemize}
\item $J^1_m$: represents the gains generated by the the acquisition of new data of the size $x_mNb$, where $b$ represents the average amount of data that a UAV transmits to a VSP. Here, with synchonrzaiton data from the UAVs, VSP as a data supplier to the Metaverse platform, can benefit by selling the data to the Metaverse platform. The platform, as an intermediary to provide the data interoperability, can benefit the other VSPs to construct the DTs for their own use. Therefore there is a portion of revenue inflow for VSP $m$ that is linked to the data supply, or the data contribution from the UAVs. Let $\alpha_m$ denote the unit data price for the VSP $m$, then we have $J_m^1=\alpha_mx_mNb$ in \eqref{J1J2} .

\item $J^2_m$: represents the gains (e.g., virtual business profit) generated by the DTs with value of $z_m$. Here, we consider that the business is positively correlated with the DTs values. Therefore, the gains can be evaluated as $J^2_m=\beta_m z_m d_m$, where 
$\beta_m$ denotes the unit preference value that VSP $m$ has towards a unit increase in the value of the DTs. In addition, $\beta_m$ is considered to concave-upward w.r.t. $\theta_m$, e.g,, $\beta_m=e^{10\theta_m}$. This is to indicate DTs are valued more when the VSP is more sensitive to the non-updated DTs, i.e., a higher valued decay rate.
 
\item $J^3_m$: represents a penalty term, reflecting disutility when DT values are far away from the preferred values, e.g., the twins are not fresh enough (under-synchronized) or too fresh than what is needed (over-synchronized, leading to excessive synchronization cost). Let $v_m$ denote the VSP's desired values of its DTs, and $J^3$ can be defined as $(z_m-v_m)^2$ accordingly \cite{zhuPricingSpectrumSharing2014}.

\item $J^4_m$: represents the disutility caused by UAVs' insufficient data supply. As mentioned before, $d_m$ is the number of DTs of VPS $m$. 
With the synchronization intensity $\eta_m$, overall, the total amount of data that VSP $m$ requires from the UAVs devices is $d_m\eta_m k_m$, where $k_m$ is the average unit data request rate of the DT. However, since there are $x_mN$ devices that choose VSP $m$, the total data contribution to VSP $m$ is $x_mNb$ as stated earlier. The gap of $(d_m\eta_m k_m-x_mNb)$ results in disutility to the VSP $m$. For example, when an insufficient number of UAVs select VSP $m$, UAVs can complete the synchronization task at lower sampling rates \cite{liuIntelligentUAVsTrajectory2021}, resulting in lower quality synchronization data 
and affecting the utility of the VSP $m$.  
We adopt the square term to prevent the data from being over-contributed as well. 
\end{itemize}

The objective functional $\mathfrak{J}_m(\bm{\eta}) $ to be maximized for VSP $m$ is defined by the discounted cumulative payoff over the time horizon $\mcal{T}$, expressed as follows:
\begin{equation}
\begin{aligned}
\mathfrak{J}_m(\bm{\eta}) = \int_{0}^{T} e^{-\rho t} J_m(\bm{x}(t),\bm{z}(t),\bm{\eta}(t),t) \dif t \\ 
= \int_{0}^{T} e^{-\rho t} \{ 
\omega_m^1x_m(t)Nb \alpha_m+\omega_m^2z_m(t) \beta_m -\omega_m^3 \\
(z_m(t)-v_m)^2 -\omega_m^4[x_m(t)Nb-\eta_m(t) d_m k_m]^2  \} \dif t , 
\end{aligned}
\end{equation}
where $\rho\geq 0$ denotes the constant time preference rate (or discount rate) for VSPs.  $J_m(\bm{x}(t),\bm{z}(t), \bm{\eta}(t), t)$ is the instantaneous utility derived by choosing the synchronization intensity value $\bm{\eta}(t)$ at time $t$ when the current states of the game is $\bm{x}(t)$ and $\bm{z}(t)$, as explained earlier in \eqref{Jm}. 

Therefore, the optimal synchronization intensity control problem for VSP $m$ can be formulated as:
\begin{align}
\max_{\eta_m} \quad &   \mathfrak{J}_m(\bm{\eta})  \label{objective-functional}\\
\textrm{s.t.} \quad & \dot{x}_{m}(t)  =  \delta x_{m}(t) (u_m(t) - \bar{u}(t)), \quad \forall m\in\mcal{M} \label{constraint1-dynamics-x} \\
& \dot{z}_m(t) =\eta_m(t) - \theta_m z_m(t), \quad \forall m\in\mcal{M} \label{constraint1-dynamics-z}\\
  &\bm{x}(0)=\bm{x}_0, \quad \bm{z}(0)=\bm{z}_0  \label{constraint2-initial-values}\\
&\bm{x}(t)\in\Delta, \bm{z}_m(t)\geq 0, {\eta}_m(t)\geq 0,  \label{constrain2-space}
\end{align}
for $m=1,2,\ldots,M$, where column vectors $\bm{x}(0) $ and $\bm{z}(0) $ are initial states for the population states of UAVs and VSPs of DT values.

\subsection{Open-Loop Nash Solution} \label{sec-sub-open-loop-for-simudiff}
A \textbf{Nash solution} or Nash equilibrium is an $M$-tuple of synchronization strategies $\bm{\eta}=[\eta_1,\eta_2,\ldots,\eta_M]$ such that, given the opponents' equilibrium synchronization strategies, no VSP has an incentive to change its own strategy. 
Denote the synchronization strategies of VSPs other than $m$ as $\bm{\eta}_{-m}:=[\eta_1,\eta_2,\ldots,\eta_{m-1},\eta_{m+1},\ldots,\eta_M]$.
In the differential game, the Nash solution is defined by a set of $M$ admissible trajectories $\bm{\eta}^*:= [\eta_1^*, \eta_2^*,\ldots,\eta_M^*]$, which have the property that 
\begin{equation}\label{open-loop-nash-solutions}
\mathfrak{J}_m(\bm{\eta}^*) = \max_{\eta_m} \mathfrak{J}_m(\eta_1^*,\ldots,\eta_{m-1}^*,\eta_m, \eta_{m+1}^*,\ldots, \eta_M^*), 
\end{equation}
for $m=1,2,\ldots,M$. 

Next, we adopt the \textbf{open-loop solutions} for the above Nash differential game. The open-loop Nash solution to the optimal control problem refers to the case where the control paths are functions of time $t$ only, satisfying \eqref{open-loop-nash-solutions}.  
For simplicity, hereafter, we use a column vector $\bm{y}$ to represent the system states $\bm{x}$ and $\bm{z}$, i.e., $\bm{y}=[x_1,x_2,\ldots,x_M,z_1,z_2,\ldots,z_M]^T$. Then, the constraints defined in \eqref{constraint1-dynamics-x} to \eqref{constrain2-space} can be replaced by the following conditions: 
\begin{align}
&\dot{\bm{y}}(t) = [\dot{x}_1,\ldots,\dot{x}_M,\dot{z}_1,\ldots,\dot{z}_M]^{\top} \label{dynamics-y} \\
&\bm{y}(0) = [\bm{x}(0)^{\top}, \bm{z}(0)^{\top} ]^{\top} \label{initial-y}\\
&\bm{y}(t)\in \mathcal{Y}:= \Delta \times \bb{R}_+^M, \quad \eta_m(t)\in\bb{R}_+  \label{space}.
\end{align}
This means that the process to solve the open-loop Nash solution is to solve the optimal control problem, $\mathscr{P}1$, defined by
\begin{equation}
\begin{aligned}
\max_{\eta_m} \quad & \mathfrak{J}_m({\eta_m,\bm{\eta}_{-m}^*})  \label{optimal-control-NE}\\
\textrm{s.t.} \quad & \text{\cref{dynamics-y,initial-y,space}},
\end{aligned}
\end{equation}
for $m=1,2,\ldots,M$. 
To solve $\mathscr{P}1$, we first define a (current-value) Hamiltonian function $H$ as follows 
\begin{equation}\label{Ham}
\begin{aligned}
H_m(\bm{y},\eta_m,\bm{\lambda}_m,t) =  J_m(\bm{y},\eta_m,\bm{\eta}_{-m}^*,t) + \bm{\lambda}_m \dot{\bm{y}},
\end{aligned}
\end{equation}
for $m=1,2,\ldots,M$. The domain of $H_m$ is the set $\{(\bm{y},\eta_m,\bm{\lambda}_m,t) | \bm{y}\in \mathcal{Y}, \eta_m\in\bb{R}_+, \bb{\lambda}_m \in \bb{R}^{2M} ,t\in \mcal{T}\}$. Here, the row vector $\bm{\lambda}_m=[\lambda_{m1},\lambda_{m2},\ldots,\lambda_{m2M}]$ is called the (current-value) adjoint variable or costate variables. Therefore, the maximized Hamiltonian function $H^*: \mathcal{Y}\times\bb{R}^{2M}\times\mcal{T}\rightarrow \bb{R}$ is 
\begin{equation}\label{key}
H_m^*(\bm{y},\bm{\lambda}_m, t) = \max\{ H_m(\bm{y},\eta_m,\bm{\lambda}_m,t) | \eta_m\geq 0 \}.
\end{equation}  
A necessary and sufficient condition for the optimal control is given by the augmented maximum principle, stated as in \cref{theorem-1}. See \cite{docknerDifferentialGamesEconomics2000} for the detailed proof.
\begin{theorem}\label{theorem-1} 
Consider an optimal control problem $\mathscr{P}1$ and define the Hamiltonian function $H_m$ and the maximized Hamiltonian function $H_m^*$ as above. The state space $\Theta$ is a convex set and the scrap value function $S$ is continuously differentiable and concave (note that $S\equiv 0$ in $\mathscr{P}1$). If there exists an absolutely continuous function $\bm{\lambda}_m: [0,T]\rightarrow \bb{R}^{2M}$ for all $m\in\mcal{M}$, such that the maximum condition
\begin{equation}
\label{maximum-condition-eta}
H_m(\bm{y},\eta_m^*,t)  = H_m^*(\bm{y},\bm{\lambda}_m ,t) ,
\end{equation}
the adjoint (costate) equation
\begin{equation}
\dot{\bm{\lambda}}_{m} =  \rho \dot{\bm{\lambda}}_{m} - \frac{\partial H_m^*(\bm{y},\bm{\lambda}_m ,t) }{\partial \bm{y}}  \label{adjoint},
\end{equation}
and the transversality condition 
\begin{equation}\label{tranversality}
\bm{\lambda}_m(T) = S'(\bm{y}(T)) = 0
\end{equation}
are satisfied, and such that the function $H_m^*$ is concave and continuously differentiable w.r.t. $x$ for all $t\in\mcal{T}$, then $\eta_m(\cdot)$ is an optimal control path. If further the set of feasible controls does not depend on $\bm{y}$ (which is true for $\mathscr{P}1$ as $\eta_m\in\bb{R}_+$), \eqref{adjoint} can be replaced by 
\begin{equation}\label{adjoint2}
\dot{\bm{\lambda}}_{m} =  \rho {\bm{\lambda}}_{m} - \frac{\partial H_m(\bm{y},\eta_m^*,t) }{\partial \bm{y}}.
\end{equation}
\end{theorem}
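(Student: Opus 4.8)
The plan is to prove the sufficiency direction asserted in the theorem (the ``if $\ldots$ then $\eta_m(\cdot)$ is optimal'' part) by the classical Arrow-type argument adapted to the current-value Hamiltonian, treating $\mathscr{P}1$ as a single-agent optimal control problem in which the opponents' controls $\bm{\eta}_{-m}^*$ are frozen at their equilibrium values. I would fix a candidate triple $(\eta_m^*,\bm{y}^*,\bm{\lambda}_m)$ satisfying the maximum condition \eqref{maximum-condition-eta}, the adjoint equation \eqref{adjoint}, and the transversality condition \eqref{tranversality}, and let $(\eta_m,\bm{y})$ be any other admissible control and its induced trajectory, sharing the common initial state so that $\bm{y}(0)=\bm{y}^*(0)$. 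The target inequality is $\mathfrak{J}_m(\eta_m^*,\bm{\eta}_{-m}^*)-\mathfrak{J}_m(\eta_m,\bm{\eta}_{-m}^*)\ge 0$.

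First I would write the objective gap as $\int_0^T e^{-\rho t}[J_m(\bm{y}^*,\eta_m^*,\cdot)-J_m(\bm{y},\eta_m,\cdot)]\,dt$ (the scrap contribution drops out since $S\equiv 0$), then substitute $J_m=H_m-\bm{\lambda}_m\dot{\bm{y}}$ from the definition \eqref{Ham}. The maximum condition gives $H_m(\bm{y}^*,\eta_m^*,\bm{\lambda}_m,\cdot)=H_m^*(\bm{y}^*,\bm{\lambda}_m,\cdot)$, while by the very definition of the maximized Hamiltonian $H_m(\bm{y},\eta_m,\bm{\lambda}_m,\cdot)\le H_m^*(\bm{y},\bm{\lambda}_m,\cdot)$ for the arbitrary control. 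Combining the two lower-bounds the integrand by $[H_m^*(\bm{y}^*,\bm{\lambda}_m,\cdot)-H_m^*(\bm{y},\bm{\lambda}_m,\cdot)]-\bm{\lambda}_m(\dot{\bm{y}}^*-\dot{\bm{y}})$, so that only the maximized Hamiltonian and the costate remain.

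Next I would invoke the hypothesized concavity of $H_m^*$ in the state $\bm{y}$ to obtain the supporting-hyperplane inequality $H_m^*(\bm{y}^*,\cdot)-H_m^*(\bm{y},\cdot)\ge \partial_{\bm{y}}H_m^*(\bm{y}^*,\cdot)\,(\bm{y}^*-\bm{y})$, and then use the adjoint equation to replace $\partial_{\bm{y}}H_m^*=\rho\bm{\lambda}_m-\dot{\bm{\lambda}}_m$. Writing $\Delta\bm{y}:=\bm{y}^*-\bm{y}$, the discounted integrand is then bounded below by $e^{-\rho t}[(\rho\bm{\lambda}_m-\dot{\bm{\lambda}}_m)\Delta\bm{y}-\bm{\lambda}_m\dot{\Delta\bm{y}}]$, which is precisely the total derivative $-\frac{d}{dt}[e^{-\rho t}\bm{\lambda}_m\Delta\bm{y}]$. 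Integrating over $[0,T]$ collapses the whole bound to the boundary term $-e^{-\rho T}\bm{\lambda}_m(T)\Delta\bm{y}(T)+\bm{\lambda}_m(0)\Delta\bm{y}(0)$; the first summand dies by transversality $\bm{\lambda}_m(T)=0$ and the second by the common initial condition $\Delta\bm{y}(0)=0$, giving $\mathfrak{J}_m(\eta_m^*)\ge\mathfrak{J}_m(\eta_m)$.

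The step I expect to be the main obstacle is the differentiability of the maximized Hamiltonian $H_m^*$ in $\bm{y}$, since it is defined through a pointwise maximization over $\eta_m$, together with ensuring the discounted integrand really is an exact total derivative; the current-value relation $\partial_{\bm{y}}H_m^*=\rho\bm{\lambda}_m-\dot{\bm{\lambda}}_m$ is exactly what makes this work. I would discharge the differentiability by the envelope theorem, using that $H_m$ is a strictly concave quadratic in $\eta_m$ with a unique maximizer. This same envelope argument delivers the theorem's final claim: when the admissible control set is state-independent (true here since $\eta_m\in\bb{R}_+$), one has $\partial_{\bm{y}}H_m^*=\partial_{\bm{y}}H_m$ evaluated at $\eta_m^*$, so \eqref{adjoint} may be replaced by \eqref{adjoint2}. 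I would note that verifying the concavity hypothesis itself is delicate in this model, because the replicator term $\delta x_m(u_m-\bar{u})$ enters $H_m$ through $\bm{\lambda}_m\dot{x}_m$ and is not jointly concave in general, so it must be checked against the signs of the costates and the quadratic structure of $J_m$ in \eqref{Jm}--\eqref{J3J4}. Finally, I would remark that the necessity half of the ``necessary and sufficient'' statement is the classical Pontryagin direction, established by needle (spike) variations, which I would cite rather than reprove.
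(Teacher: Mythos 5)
Your proposal is correct, and it coincides with the paper's own treatment in the only sense available: the paper does not prove the theorem inline but defers entirely to Dockner et al.\ \cite{docknerDifferentialGamesEconomics2000}, and your Arrow-type sufficiency argument --- bounding the objective gap via the maximum condition and the definition of $H_m^*$, applying the supporting-hyperplane inequality from concavity of $H_m^*$ in $\bm{y}$, substituting the current-value adjoint relation $\partial_{\bm{y}}H_m^*=\rho\bm{\lambda}_m-\dot{\bm{\lambda}}_m$ so the integrand becomes the exact derivative $-\frac{d}{dt}\bigl[e^{-\rho t}\bm{\lambda}_m\Delta\bm{y}\bigr]$, and killing the boundary terms with transversality and the common initial state, with the envelope theorem both discharging differentiability of $H_m^*$ and justifying the replacement of \eqref{adjoint} by \eqref{adjoint2} when the control set is state-independent --- is precisely the standard proof given in that reference. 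Two of your side remarks are also on target: the paper's \eqref{adjoint} contains a typo ($\rho\dot{\bm{\lambda}}_m$ should read $\rho\bm{\lambda}_m$, the form you correctly use, without which the telescoping to a total derivative fails), and the concavity of $H_m^*$ in $\bm{y}$ is a genuine hypothesis here rather than a freebie, since the replicator term enters $H_m$ through $\bm{\lambda}_m\dot{\bm{y}}$ and need not be jointly concave for arbitrary costate signs.
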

Note that {${\partial H_m(\bm{y},\eta_m^*,t)  }/{\partial \bm{y}}$ is a row vector, following \cite{sethiOptimalControlTheory2019} that the derivative of a real-valued function w.r.t. a vector (no matter a column vector or a row vector) is a row vector.}
Furthermore, to solve \eqref{maximum-condition-eta}, we notice that the function $H_m$ in $\mathscr{P}1$ is strictly concave w.r.t. $\eta_m$. Therefore, we can instead solve $\eta_m^*$ by the first order optimality conditions, given as follows:
\begin{equation}\label{solved-u}
\left. \frac{\partial H_m(\bm{y},\eta_m,t)  }{\partial \eta_m} \right |_{\eta_m= \eta^*_m }= 0   .
\end{equation}

After $\eta_m^*$ is solved by \eqref{solved-u}, a boundary value problem of a system of ordinary differential equations can be defined by $\dot{\bm{y}}$ in \eqref{dynamics-y}, $\dot{\bm{\lambda}}$ in \eqref{adjoint2}, together with their boundary values defined in \eqref{initial-y} and \eqref{tranversality}.  The states for this new dynamic systems are $\bm{y}$ and $\bm{\lambda}$, which can be numerically solved using \textit{bvp4c} in Matlab, or \textit{scipy.integrate.solve\_bvp} in python.

\section{Upper-level differential game for VSPs with hierarchical decision making process}\label{sec-upper-hierachical-play}
After obtaining the solution for the simultaneous play VSPs in \cref{sec-upper-diff-game}, we now consider a more complicated realistic case, in which some VSPs are allowed by the Metaverse to choose their synchronization strategies earlier than the other VSPs. We refer to such VSPs as \textit{leaders}. The VSPs that observe leaders' strategies and then make their decisions are called \textit{followers}. To model this sequential strategic interaction among the VSPs, or a hierarchical play, we adopt the Stackelberg differential game. 
\subsection{Problem Formulation}
We use $\mcal{L}$ to denote the set of leaders and $\mcal{F}$ the set of followers, such that $\mcal{L}\cap \mcal{F}=\emptyset$ and $\mcal{L}\cup\mcal{F}=\mcal{M}$.  We use $\bm{\eta}^L=[\eta_i^L]_{i\in\mcal{L}}$ to denote synchronization strategy of leaders and $\bm{\eta}^F=[\eta_m^F]_{m\in\mcal{F}}$ synchronization strategy of followers, so as $\bm{\eta}^{L*}$ and $\bm{\eta}^{F*}$ for the optimal ones. At time $0$, the leaders announce the synchronization strategy path $\bm{\eta}^L(t)$. The followers, taking those synchronization strategy paths as given, choose their synchronization strategies $\bm{\eta}^F(t)$ so as to maximize their objective functional. 

\subsubsection{Followers' Problem $\mathscr{P}_F$}
Given the leader's optimal synchronization strategy paths $\bm{\eta}^{L}$, the followers problem $\mathscr{P}_F$, is the same as $\mathscr{P}_1$ defined in  \cref{sec-upper-diff-game}. In particular, for a follower $m\in\mcal{F}$, the optimal control problem is 
\begin{equation}
\begin{aligned}
\max_{\eta_m^F} \quad & 
\mathfrak{J}_m({\eta_m^F,\bm{\eta}^{L},\eta_{{\cal{F}} \backslash \left\{m\right\}}^{F*}}) 
 \label{optimal-control-NE}\\
\textrm{s.t.} \quad & \text{\cref{dynamics-y,initial-y,space}},
\end{aligned}
\end{equation}
where $\mcal{F}\setminus m :=\{i \in \mcal{F}, i\neq m\}$ is the set of followers other than VSP $m$. 
For the follower $m\in\mcal{F}$, its Hamiltonian function, denoted by $H_m^L$, is the same as \eqref{Ham}, that is $H_m^L = H_m$. 
Then, the optimal synchronization strategy $\eta_m^{F*}$ for the follower $m$ is equivalent to $\eta_m^*$, which is the solution to \eqref{solved-u}, and adjoint equations $\dot{\bm{\lambda}}_m$ satisfies \eqref{adjoint2}. Due to the strict concavity of $H_m^F$, $\eta_m^{F*}$ can be uniquely determined by  \eqref{solved-u}, as a function of $\bm{y}$, $\bm{\eta}^{L}$,  $[{\eta}^{F*}_{j}]_{j\in\mcal{F}\setminus m}$ and $t$, for all $m\in\mcal{F}$. That is, we can write
\begin{equation}\label{eq1}
\eta_{m}^{F*}=\mathsf{\tilde{g}}_m(\bm{y},\bm{\lambda}_m,\bm{\eta}^{L},\bm{\eta}^{F*}_{\mcal{F}\setminus \{m\}},t), \quad \forall m\in\mcal{F},
\end{equation}
which can be further simplified by substituting $[{\eta}^{F*}_{j}]_{j\in\mcal{F}\setminus m}$ based on \eqref{eq1} into $\mathsf{\tilde{g}}_m(\cdot)$. Therefore, we can express $\eta_{m}^{F*}$ as a function of $\bm{y},\bm{\lambda}_m$ and $\bm{\eta}^{L}$ as follows: 

\begin{equation}\label{follower-optimal-u}
\eta_{m}^{F*}=\mathsf{g}_m(\bm{y},\bm{\lambda}_m,\bm{\eta}^{L},t), \quad \forall m\in\mcal{M},
\end{equation}
or the vector function $\bm{\eta}^{F*}=\mathsf{g}(\bm{y},\Lambda,\bm{\eta}^{L},t)$, where $\Lambda:=[\bm{\lambda}_m]_{m\in\mcal{F}}$ represents all the adjoint (costate) variables in $\mathscr{P}_F$.

Substituting \eqref{follower-optimal-u} into \eqref{adjoint2}, we obtain
\begin{equation}\label{lambda2}
\dot{\bm{\lambda}}_m=\rho\bm{\lambda}_m -  \frac{\partial H_m(\bm{y},\mathsf{g}_m(\bm{y},\bm{\lambda}_m,\bm{\eta}^{L},t),t) }{\partial \bm{y}}, \quad  m\in\mcal{F}.
\end{equation}
\cref{dynamics-y,initial-y,space,tranversality,lambda2,follower-optimal-u} characterize the follower's best response to the leaders control path $\bm{\eta}^{L*}$.

\subsubsection{Leaders' Problem $\mathscr{P}_L$} 
As for the leaders' problem $\mathscr{P}_L$, for any leader $i\in\mcal{L}$, it knows the followers' best responses. Therefore, different from the simultaneous differential game, the system dynamics additionally include the adjoint equations of the followers' problem. Again, similar to the follower's game, among leaders, we obtain the Nash equilibrium. As such, given the best responses of all the followers, and the other opponent leaders play their strategy $\bm{\eta}^{L}_{\mcal{L}\setminus i}$, the optimal control problem for the leader $i\in\mcal{L}$ is formulated as follows: 
\begin{equation}\label{leader1}
\begin{aligned}
\max_{\eta_i^L} \quad & \mathfrak{J}_i(\eta_i^L,\bm{\eta}^{F*},\bm{\eta}^{L}_{\mcal{L}\setminus \{i\} }) \\
\textrm{s.t.} \quad & \text{\cref{dynamics-y,initial-y,space,tranversality,lambda2,follower-optimal-u}},
\end{aligned}
\end{equation}
where $\bm{\eta}^{F*}=\mathsf{g}(\bm{y},\Lambda,\bm{\eta}^{L},t)$ stated before. Then, we define the Hamiltonian function for leader $i$ as
\begin{multline}\label{leader-ham}
H^L_i(\bm{y},\Lambda,\bm{\eta}^{L}, \bm{\psi}_i,\bm{\phi}_i ,t)={J}_i(\eta_i^L,\mathsf{g}(\bm{y},\Lambda,\bm{\eta}^{L},t),t),\bm{\eta}^{L}_{\mcal{L}\setminus i})\\
+\bm{\psi}_i\dot{\bm{y}}+\sum_{m\in\mcal{F}}\bm{\phi}_{mi}\bm{\dot{\lambda}}_m^T, 
\end{multline}
for $i\in\mcal{L}$, where row vector $\bm{\psi}_i=[\psi_{ij}]_{j=1}^{2M}$ is the adjoint variables for the states $\bm{y}$, row vector $\bm{\phi}_{mi}=[\phi_{mij}]_{j=1}^{2M}$ is the adjoint variables for the adjoint variables $\bm{\lambda}_m$, and $\bm{\phi}_i=[\bm{\phi}_{mi}]_{m\in\mcal{F}}$. Note that the last term, $\bm{\phi}_{mi}\bm{\dot{\lambda}}_m^T$, is an inner product as $\bm{\dot{\lambda}}_m$ is a row vector, and $(\cdot)^T$ is the transpose operation. 

We then have the optimality conditions (again applying \cref{theorem-1})
\begin{align}
&\frac{\partial H^L_i(\bm{y},\Lambda,\bm{\eta}^{L}, \bm{\psi}_i,\bm{\phi}_i ,t)}{\partial \eta_i^L}=0\\
&\dot{\bm{\psi}}_{i}=\rho\bm{\psi_{i}}-\frac{\partial H^L_i(\bm{y},\Lambda,\bm{\eta}^{L}, \bm{\psi}_i,\bm{\phi}_i ,t)}{\partial \bm{y}} \label{dotpsi}\\
&\dot{\bm{\phi}}_{mi} = \rho \bm{\phi}_{mi}-\frac{\partial H^L_i(\bm{y},\Lambda,\bm{\eta}^{L}, \bm{\psi}_i,\bm{\phi}_i ,t)}{\partial\bm{\phi}_{mi}} \label{dotphi}, 
\end{align}
for $i\in\mcal{L}$ and $m\in\mcal{F}$. 

Different to the follower's game, there is one more transversality condition for an adjoint variable in the leader's problem, which depends on the costate variables in the follower's problem \cite{docknerDifferentialGamesEconomics2000}. Let $\lambda_{mj}\in \Lambda$ where $j=1,2,\ldots,2M$ and $m\in\mcal{F}$ be a co-state variable in the follower's problem, and $\phi_{mij}(t)\in\bm{\phi}_{mi}$ be the costate variable of $\lambda_{mj}$. Then the costate variable $\lambda_{mj}(t)$ is called \textit{uncontrollable} by the leader, if $\lambda_{mj}(0)$ is independent of the leader $i$'s control path $\eta_i$, e.g., a function of time $t$ only. Otherwise, it is said to be \textit{controllable}. For those leaders' uncontrollable states, additional transversality conditions are needed, by setting $\phi_{mij}(0)=0$.

Similarly, the concavity of the leader's Hamiltonian function $H_i^L$ in \eqref{leader-ham} ensures $\eta_i^{L*}$ can be uniquely expressed as a function of $\bm{y},\Lambda,
\bm{\eta}_{\mcal{L}\setminus \{i\}}, \bm{\psi}_i,\bm{\phi}_i,t$ for all $i\in\mcal{L}$. As such, after simplification, $\eta_i^{L*}=\mathsf{h}_i(\bm{y},\Lambda, \bm{\psi}_i,\bm{\phi}_i,t)$, or the vector function $\bm{\eta}^{L*}=\mathsf{h}(\bm{y},\Lambda,\Psi,\Phi,t)$, where $\Psi=[\bm{\psi}]_{i\in\mcal{L}}$ and $\Phi=[\bm{\phi}_i]_{i\in\mcal{L}}$. By backward induction, namely, substituting $\bm{\eta}^{L*}=\mathsf{h}(\bm{y},\Lambda,\Psi,\Phi,t)$ into \cref{follower-optimal-u,dotphi,dotpsi}, we can obtain the dynamics of the system states in \cref{dynamics-y,lambda2,dotphi,dotpsi} with only the states and time variable, i.e., $\bm{y},\Lambda,\Psi,\Phi$ and $t$. Together with the boundary conditions for the system states, a two point boundary value problem is defined, which can be solved numerically as stated earlier in \cref{sec-upper-diff-game}.

The implementation is given in \cref{algo}. The complexity is as follows. Given that there are $M$ players in the game, the population states $\bm{x}$ is of dimension $M-1$, and the DTs' value states $\bm{z}(t)$ is of dimension $M$. Thus, the overall number of system states is linear in $M$ so are the co-states variables. Therefore, it is clear that steps $8-9$ are of complexity $O(M)$ and step $4$ has complexity $O(1)$. Therefore, the overall complexity is $O(M^2)$.

\begin{algorithm}
	\caption{Implementation of Dynamic Hierarchical Framework}
	\label{algo}
	\begin{algorithmic}[1]
		\renewcommand{\algorithmicrequire}{\textbf{Input:}} 
		\renewcommand{\algorithmicensure}{\textbf{Output:}}
		\REQUIRE VSPs' utility parameters, UAVs' parameters, and system parameters. Initialize $\bm{x}(0),\bm{z}(0)$. 
		\ENSURE  optimal control path $\bm{\eta}^*(t)$ and its associated system states $\bm{x}(t)$ and $\bm{z}(t)$

		\STATE {\emph{\underline{Lower-level Evolutionary Game}}}
		\STATE Compute $u_m(\bm{x},\bm{\eta})$ for all $m\in\mcal{M}$ and $\bar{u}$
		\FOR{$m\in\mcal{M}$}
		\STATE Derive the population dynamics given in \cref{eq-replicator-dynamics}
		\ENDFOR
		
		\STATE {\emph{\underline{Upper-level Differential Game}}}
		\FOR{$m\in\mcal{M}$}
		\STATE Obtain the objective functions $\mathfrak{J}_m$ in \cref{Jm} and the Hamiltonian function $H_m$
		\STATE Derive the dynamics for the co-states variables, namely \cref{adjoint2} for simultaneous play and \cref{adjoint2,dotphi,dotpsi} for hierarchical play.
		\ENDFOR 
		\STATE Solve the boundary value problems to return $\bm{\eta}^*(t)$
	\end{algorithmic}
\end{algorithm}

\section{Performance Evaluation}\label{sec-experiments}
In this section, we examine and validate the theoretical findings presented in the previous section. First, we numerically demonstrate the existence, uniqueness, and stability of equilibrium obtained in the lower-level evolutionary game (ESS). We then conduct sensitivity analyses by varying a series of system parameters, including the number $M$ of VSPs, learning rate $\delta$, decay rate $\theta_m$, and discount rate $\rho$. Finally, we compare the results obtained by simultaneous differential game, Stackelberg differential game, and the static Stackelberg game.  Parameters used in the simulation experiment are listed in \cref{tab-simulation-parameters}.

\begin{table}[] 
\centering
\caption{Simulation Parameters}
\label{tab-simulation-parameters}
\begin{tabular}{p{6cm} | p{1.8cm}}
\hline
\hline
\textbf{{Parameters}} & \textbf{{Values}}   \\ 
\hline
Total number of VSPs $M$& $[2,4]$\\
UAV population size $N$ & $[350,500]$ \\
Learning rate $\delta$ &  $[0.01,0.1]$\\
Discount rate $\rho$ & $[0.05,0.2]$\\
Number of DTs $d_m$ & $[50,120]$\\
Digital twins value decay rate $\theta_m$ & $[0.5,1]$\\
Weight parameters $w_i,i=1,2,3,4$ & $[0.001,1]$\\
Data price $\alpha_m$ & $0.1$\\
Data contribution from each UAV each time $b$ & $[0.1,1]$ Mb\\
Average data size request rate $k_m$ & $[0.1,0.5]$ Mb \\
VSP’s desired DT values $v_m$& $60$ \\
Parameters for the affine mapping $g(\cdot)$, $g_0,g_1$ & $1$ \\
\hline
\hline
\end{tabular}
\end{table}

\subsection{VSPs are Simultaneous Decision makers}\label{sub-section-experiment-subsecA}
\begin{figure}
     \centering
     \begin{subfigure}[b]{\linewidth}
         \centering
         \includegraphics[width=\linewidth]{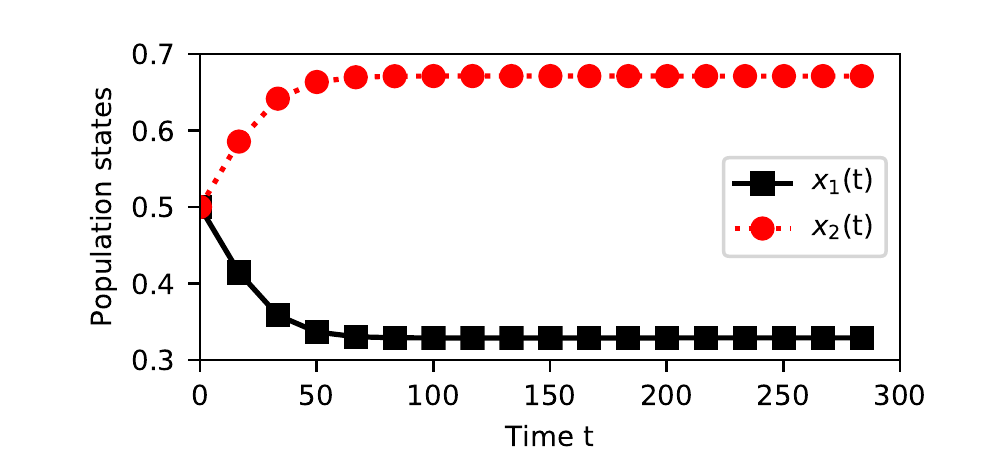}
         \caption{Evolution of UAV population states shows that the unique equilibrium in the lower-level evolutionary game for $2$ simultaneous decision-making VSPs}
         \label{fig-states1}
     \end{subfigure}
	\begin{subfigure}[b]{\linewidth}
	\centering
	\includegraphics[width=\linewidth]{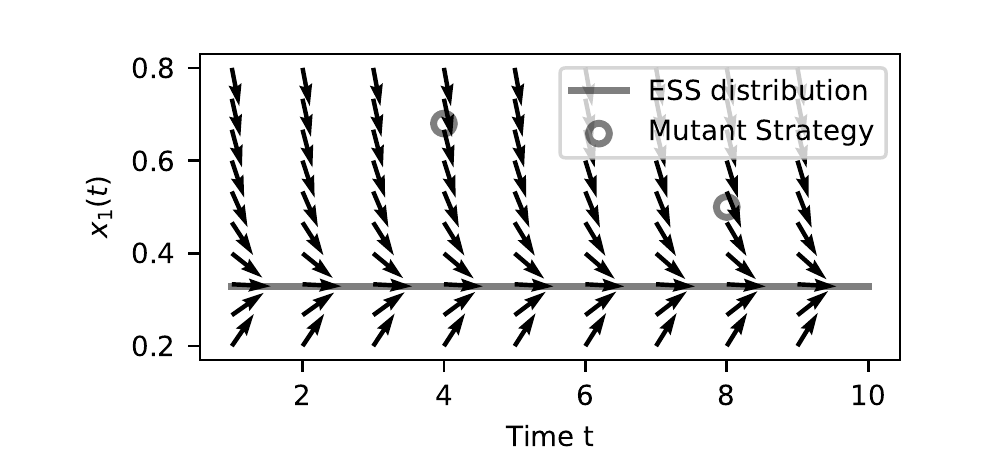}
	\caption{The direction field of the replicator dynamics shows the evolutionary stability of the equilibrium in the lower-level game.} 
	\label{fig-df_time}
	\end{subfigure}
\caption{Existence, uniqueness, and stability of the ESS in the lower-level evolutionary game}
\end{figure}
\subsubsection{Uniqueness and existence of the ESS} \label{exp-existence}
We first consider the case that VSPs simultaneously determine their synchronization strategies. We first consider a representative case with $2$ VSPs in the Metaverse, and the DT value decay rates for $2$ VSPs are $\theta_1=0.05$ and $\theta_2=0.1$. Both VSPs have $80$ DTs and the number of UAVs is $500$. The learning rate is $\delta=0.05$ and the discount rate is $\rho=1$ for both VSPs. The initial values of the DTs are $40$ for both VSPs. The time horizon $\mcal{T}=[0,300]$. The initial population states is given as $\bm{x}(0)=[0.5,0.5]$ (i.e., without prior knowledge, the chance of selecting each VSP is the same among the UAV population at time $0$). 

To examine the existence and uniqueness of the lower-level evolutionary game, we plot the trajectories of the population states over time in \cref{fig-states1}, which indicates the percentage of UAVs selecting each VSP over time. It can be seen that, $x_2(t)$ increases steadily over time, while $x_1(t)$ decreases. After a few iterations, the population states $\bm{x}(t)$ reach an equilibrium state $\bm{x}=[0.33, 0.67]$. This numerically demonstrates the unique existence of the equilibrium in the lower-level evolutionary game. 

To further understand the reason why $x_2(t)$ increases over time, we plot the synchronization strategies of both VSPs in \cref{fig-control1}. 
We observe that on average VSP $2$ adopts a higher synchronization rate than that of VSP $1$. 
Because VSPs' synchronization strategy is positively correlated with the incentive pool offered to the UAVs, VSP $2$ can issue more rewards to UAVs, and therefore UAVs that selected VSP $1$ initially may switch their VSP selection to VSP $2$ so as to enjoy higher payoffs, thereby increasing the value of $x_2(t)$ over time. However, as more UAVs select VSP $2$, the average reward that a single device can receive decreases. Therefore, $x_2(t)$ stops increasing after some rounds of iterations and remains stable, i.e., reaches an equilibrium state, after which UAVs have no incentive to change their strategies.

\subsubsection{Digital twin value states} \label{exp-zvalue}
After studying the evolution of the population states in the lower-level evolutionary game $\bm{x}(t)$, we continue the analysis of the system state by plotting the evolution of DT value states, $\bm{z}(t)$, in \cref{fig-twin1}. It can be seen that the DT values of both VSPs increase from their initial value of $40$ towards the preferred threshold $v_m=60$, after several iterations. In addition, we observe that the DT values of VSP $1$ increase faster than those of VSP $2$, in part due to the lower decay rate of VSP $1$. The figure demonstrates the validity of our proposed solution, \textit{IoT-assisted Metaverse synchronization}. In particular, to increase and preserve DT values at a certain preferred level, IoT can assist VSPs in collecting fresh data with respect to (w.r.t.) their real entities. The improved DT values result in a higher quality virtual business for VSPs as well as improving user experience. 

\begin{figure}[]
\centering
     \begin{subfigure}[b]{\linewidth}
         \centering
         \includegraphics[width=\linewidth]{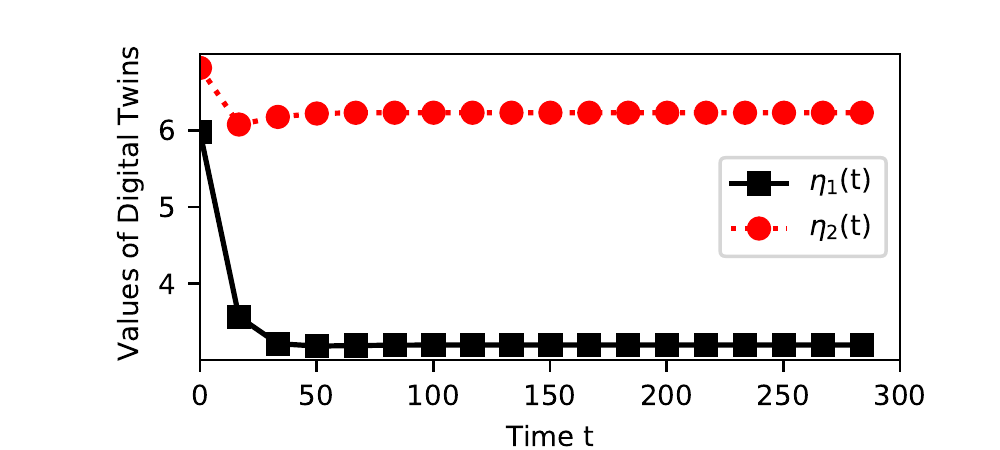}
         \caption{Trajectories of synchronization intensity, chosen by each VSP}
         \label{fig-control1}
     \end{subfigure}
\centering
     \begin{subfigure}[b]{\linewidth}
         \centering
		\includegraphics[width=\linewidth]{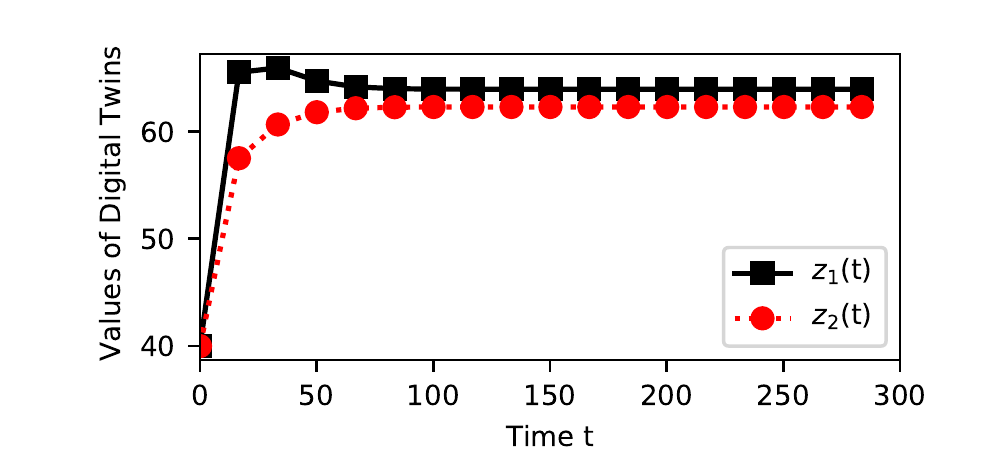}
		\caption{Trajectories of the DTs' value states}
		\label{fig-twin1}
     \end{subfigure}
\caption{Trajectories of controls and other system states in the simultaneous differential game with $2$ VSPs}
\end{figure}

\subsubsection{Stability of equilibrium in the lower-level evolutionary game}
Having demonstrated the ESS's unique existence in \cref{exp-existence}, we now investigate the stability of the ESS, namely, if the mutant strategy in the UAV population can evolve towards the ESS, given a small perturbation around the equilibrium point. We only examine $x_1(t)$ as $x_2(t)=1-x_1(t)$. We vary $x_1(0) \in [0.2,0.8]$ in step size of $0.067$ while fixing the digital value states $\bm{z}(t)$, the same as their equilibrium in the last section (\cref{exp-zvalue}). Figure \ref{fig-df_time} shows the the evolution direction of $x_1(t)$ over time. Clearly, the population states converge to the ESS ($x_1=0.33$) from any initial strategy distribution over the UAV population. In addition, mutant strategies could be eliminated by the adaptive process of VSP selection, which demonstrates the robustness and stability of the ESS in the lower-level game.

\subsection{Sensitivity Analysis}
We now study the sensitivity of the results to changes in system parameter values, including the number of UAVs $M$, learning rate $\delta$, and decay parameter $\theta_m$. 

\subsubsection{Varying $M$, the number of VSPs in the Metaverse} \label{exp-varyM}
We first consider a case with a larger value $M=4$. The parameters for VSPs $1$ and $2$ are identical to those in \cref{sub-section-experiment-subsecA}.The parameters of the additional two VSPs are $d_3=80$, $d_4=80$, $\theta_3 =0.15$, and $\theta_4=0.2$. The initial population states is $\bm{x}(0)=[0.25,0.25,0.25,0.25]$. The initial DT values are $\bm{z}(0)=[40,40,40,40]$.  

Figure \ref{fig-sens-4vsp} shows the trajectories of the population states $\bm{x}(t)$, synchronization intensities $\bm{\eta}(t)$, as well as the their DT values $\bm{z}(t)$. In general, we observe a similar pattern to the case of $ M=2$, namely, all trajectories converge in the long run. This means the dynamic interactions between the VSPs and UAVs become stable after several rounds of iterations. 
In particular, for $\bm{x}$, we observe that $x_4(t)$ increases over time, as VSP $4$ increases its synchronization intensity in response to the higher decay rate of its DTs. As a result, higher incentives (as positively correlated with the synchronization intensity) can attract more UAVs to aid the collection of synchronization data for VSP $4$. However, as shown in the second subfigure of \cref{fig-sens-4vsp}, DTs from VSP $1$ are of the highest values over time, even though its UAV selection is the lowest. 
The reason is that the decay rate of VSP $1$ is the lowest. With minimal provision of synchronization data, the values of its DTs can still be maintained. 
In contrast, though the selection of VSP $4$ is the highest among the UAV population, it is still not enough to improve the DT values of VSP $4$ to reach the threshold, $60$. 
\begin{figure}[]
\centering
\includegraphics[width=0.8\linewidth]{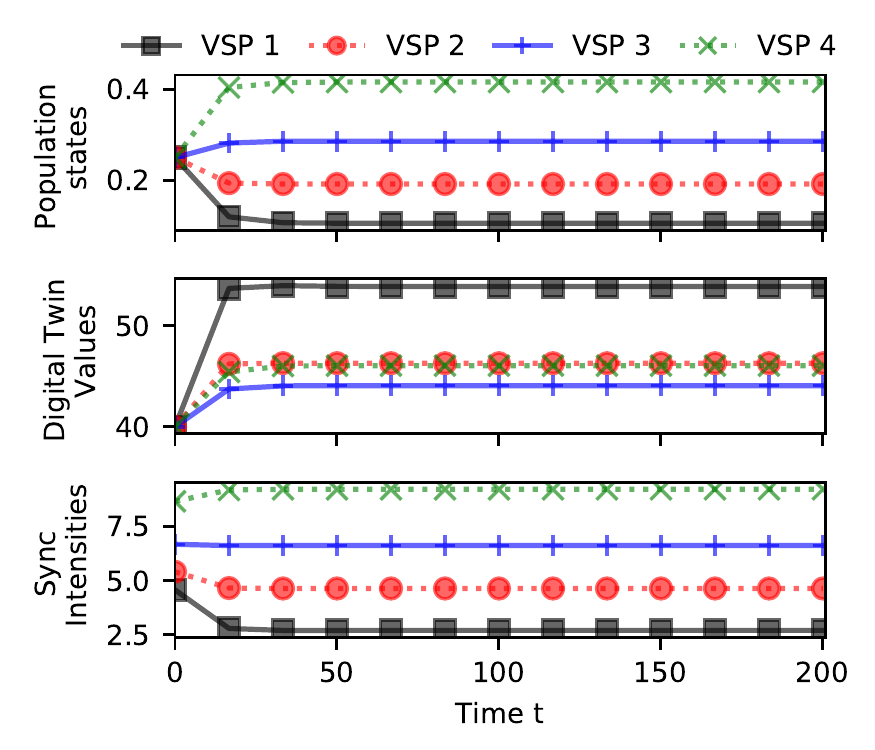}
\caption{Trajectories of population states $\bm{x}(t)$, DT values $\bm{z}(t)$, and synchronization intensities $\bm{\eta}(t)$ for the simultaneous differential game with $M=4$, exhibiting similar patterns as those of previous experiments with $M=2$.} 
\label{fig-sens-4vsp}
\end{figure}

\subsubsection{Varying $\delta$, learning rate in the evolutionary dynamics}

\begin{figure}[]
\centering
     \begin{subfigure}[b]{\linewidth}
         \centering
	\includegraphics[width=0.82\linewidth]{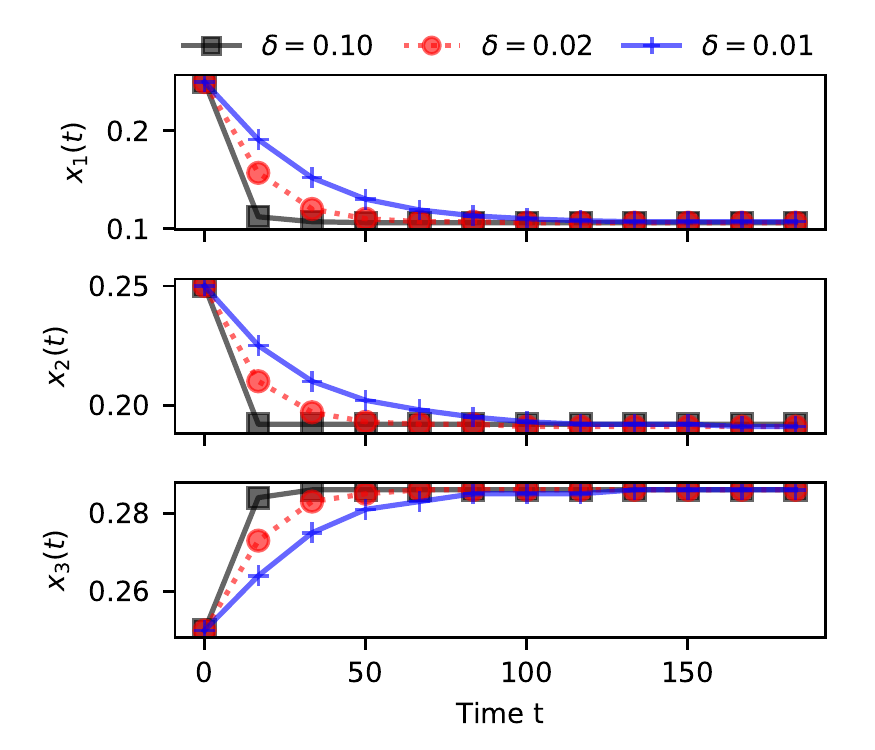}
	\caption{Population states}
	\label{delta_x}
     \end{subfigure}
\centering
     \begin{subfigure}[b]{\linewidth}
         \centering
		\includegraphics[width=0.8\linewidth]{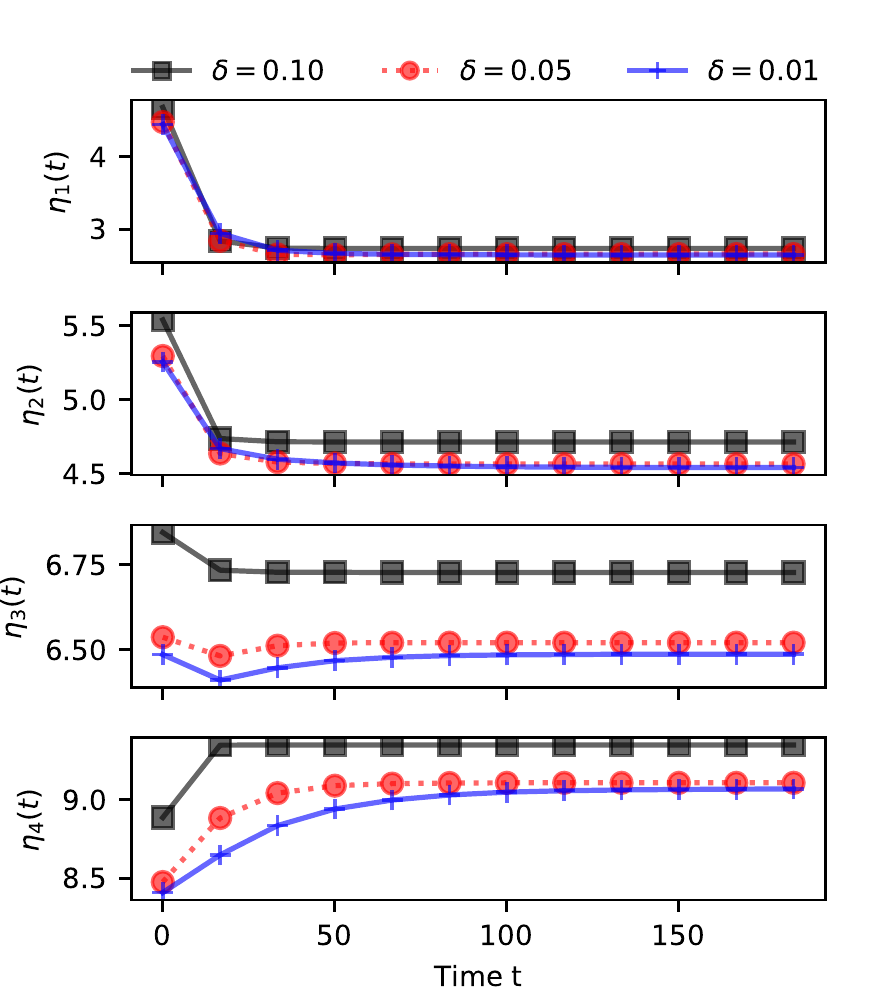}
		\caption{Sync intensities}
		\label{delta_eta}
     \end{subfigure}
\caption{Sensitivity analysis of learning rate $\delta$}
\label{exp-delta}
\vspace{-0.1cm}
\end{figure}

Figure \ref{exp-delta} shows the effect of the learning rate $\delta$ on the simultaneous differential game of the $4$ VSPs. The experiment parameters are the same as presented in \cref{exp-varyM}, except for $\delta\in \{0.01,0.02,0.1\}$. Figure \ref{delta_x} shows the trajectories of the population states $x_1(t)$, $x_2(t)$, and $x_3(t)$ under different values of $\delta$. Clearly, $\delta=0.1$ gives the fastest convergence speed of the population states in the lower-level evolutionary game, whereas $\delta=0.01$ gives the slowest speed. The reason is that the learning rate represents the frequency of population strategy adaptation, such as the percentage of UAVs adjusting their VSP selection at each decision epoch, which controls the speed of strategy adaptation in the lower-level game. 

Figure \ref{delta_eta} shows the trajectories synchronization intensities $\bm{\eta}(t)$ of four VSPs under various values of learning rate in the lower-level game. First, we observe that there is a stable strategy for all the VSPs after a few iterations for any value of $\delta$. In addition, similar to the findings in \cref{delta_x}, we can also observe that the lower the value of $\delta$, the longer it takes to achieve a stable control strategy. 
However, unlike $\bm{x}(t)$ in \cref{delta_x} that it always converge to the same equilibrium state under various $\delta$, we observe that the equilibrium synchronization intensities are at different values under the different values of $\delta$, especially for VSPs with higher decay rates, such as VSPs $3$ and $4$ in the experiment. 
The reason is that players (VSPs) in the upper-level game make decisions (synchronization intensity) based on discounted accumulative payoffs, and that recent payoffs carry greater weight on cumulative payoffs. Note that an even selection of VSPs in the UAV population does not favor VSPs $3$ and $4$ because they may require more UAVs to assist the task due to higher decay rates. Therefore, a small learning rate in the lower-level game can make VSPs $3$ and $4$ stay in the unfavored position for a longer time and decrease the overall discounted payoff. Subsequently, VSPs may decide to reduce their synchronization intensity in response to this situation. 

In contrast, selections of VSPs by UAVs modeled by the replicator dynamics in evolutionary games, are myopic. UAVs only target payoffs in the infinitesimal vicinity of the present time without discounting those future payoffs or referring to the long-term memory. 
With such a myopic nature, population states can always reach the same set of equilibrium states given the varying learning rates.

\subsubsection{Varying $\theta_m$, the decay rate of a VSP}
The impact of the decay rate $\theta$ on the lower-level game (UAVs' VSP selection distribution) and the upper-level game (VSPs' synchronization intensity) is shown in \cref{decay_all}. While keeping all the parameters the same as in the previous experiment, we set the learning rate $\delta$ as $0.05$ and vary the decay rate for VSP $1$, i.e., $\delta_1\in [0.05,0.2]$ in step size of $0.0167$. 
We plot populations states $\bm{x}$ in equilibrium and synchronization intensity $\bm{\eta}$ in equilibrium for all VSPs. We observe that the value of $\eta_1(t)$ increases as the decay rate of VSP $1$ increases, whereas $\eta_i(t),i=2,3,4$ decrease. The reason is that a higher decay rate of DTs indicates that VSP $1$ requires more synchronization data to maintain its DT status, and therefore VSP $1$ increases its synchronization intensity in response. Faced with a higher incentive offered by VSP $1$, more UAVs adjust their strategies to work for VSP $1$, thereby increasing $x_1(t)$.  However, subject to a limited number of UAVs assisting the Metaverse, data provisions to other VSPs are affected. Consequently, we observe a decrease in synchronization intensity for the remaining VSPs as well as fewer UAV selections. The result demonstrates the dynamic interactions between the two levels of the game, i.e., between the VSPs and UAVs. 
\begin{figure}[]
\centering
         \includegraphics[width=0.8\linewidth]{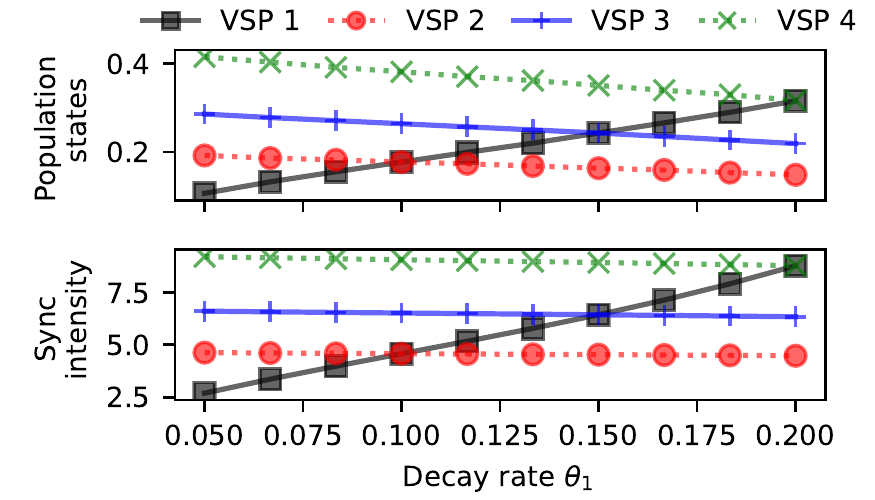}
\caption{Sensitivity analysis of DT value decay rate $\theta_1$}
\label{decay_all}
\end{figure}

\begin{figure}[]
\centering
\includegraphics[width=0.8\linewidth]{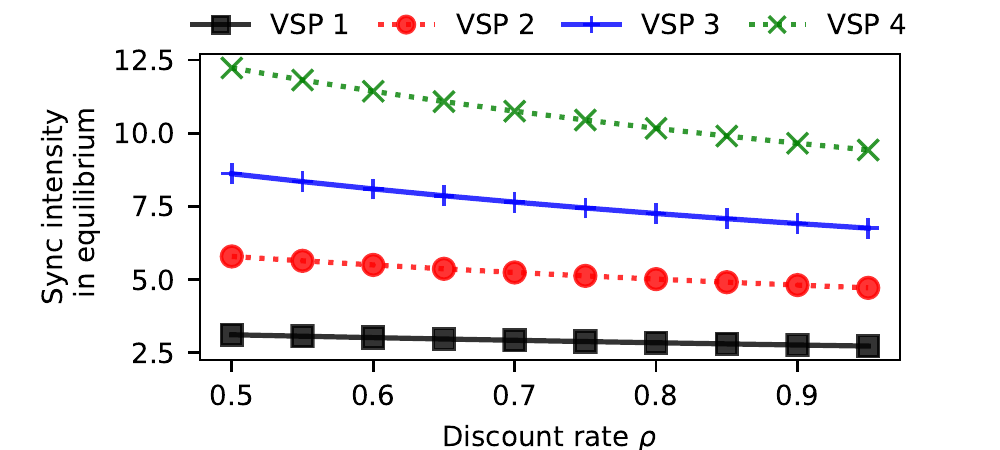}
\caption{Sensitivity analysis of time discount rate $\rho$}
\label{rho_eta}
\end{figure}

\begin{figure}[]
\centering
\includegraphics[width=0.8\linewidth]{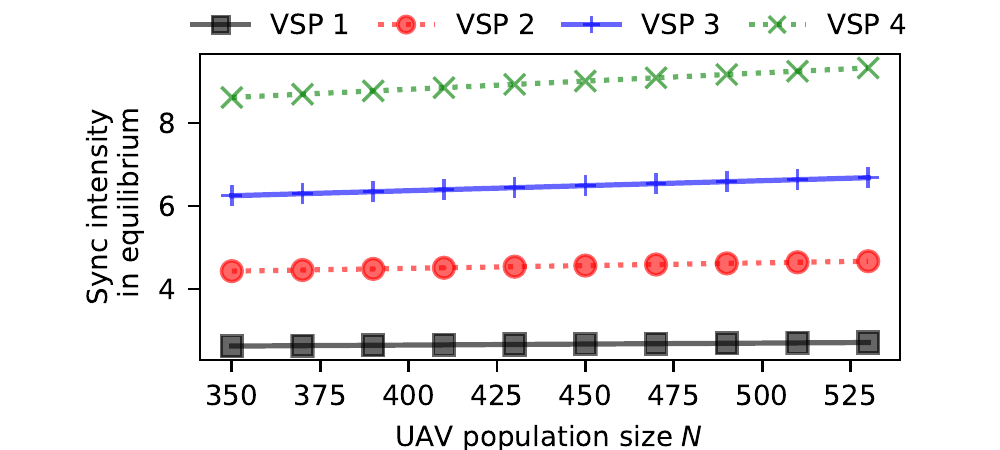}
\caption{ Sensitivity analysis of UAVs population size $N$}
\label{vary_n}
\end{figure}
\begin{figure}[]
\centering
\includegraphics[width=0.8\linewidth]{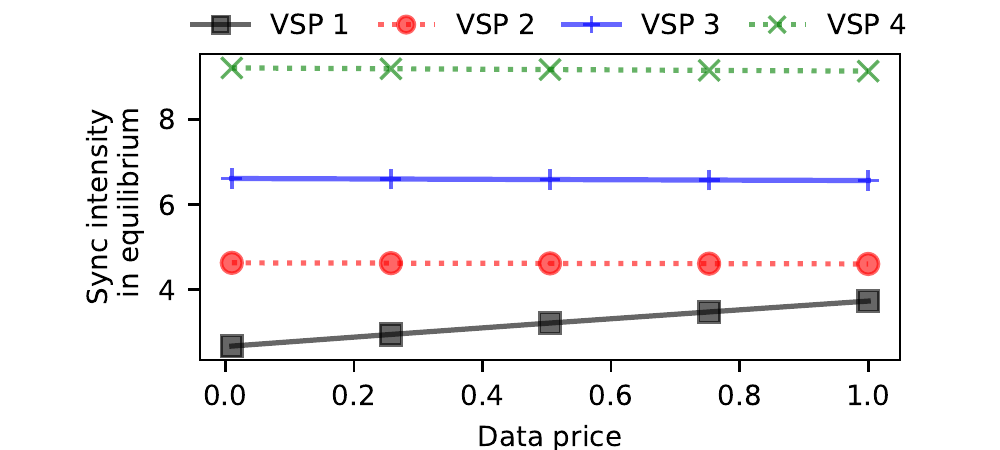}
\caption{ Sensitivity analysis of data price $\alpha_1$}
\label{data_price}
\end{figure}
\begin{figure}[]
\centering
\includegraphics[width=0.8\linewidth]{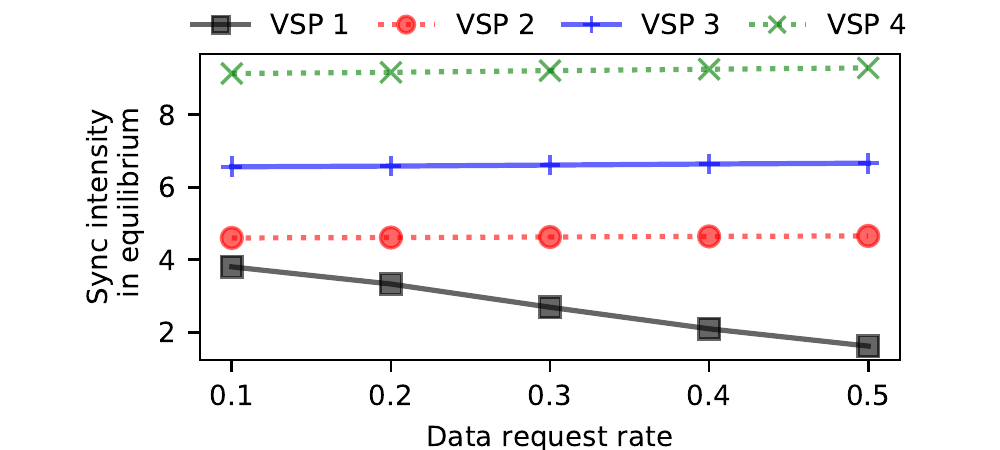}
\caption{ Sensitivity analysis of data request rate $k_1$}
\label{data_request_rate}
\end{figure}
\begin{figure}[]
\centering
\includegraphics[width=0.8\linewidth]{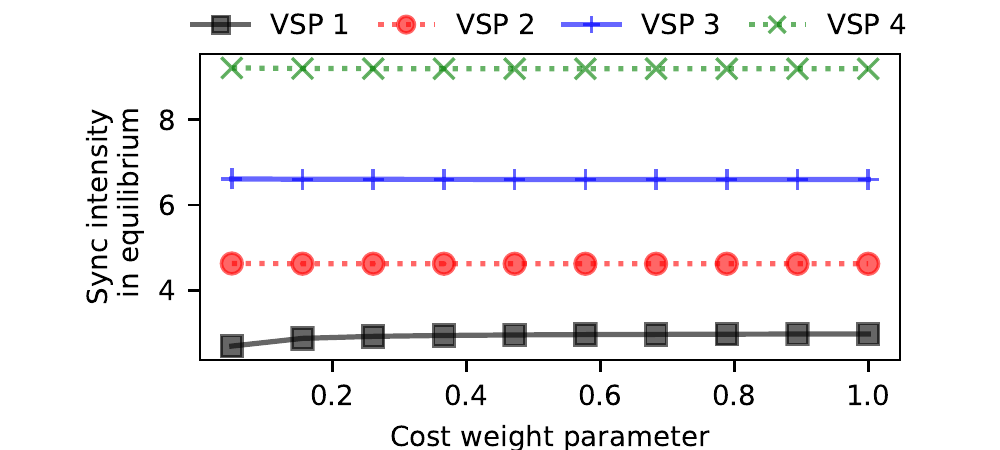}
\caption{ Sensitivity analysis of cost weight parameter $w^3_1$}
\label{zcost_eta}
\end{figure}

\begin{figure*}[t]
\centering
\includegraphics[width=0.65\linewidth]{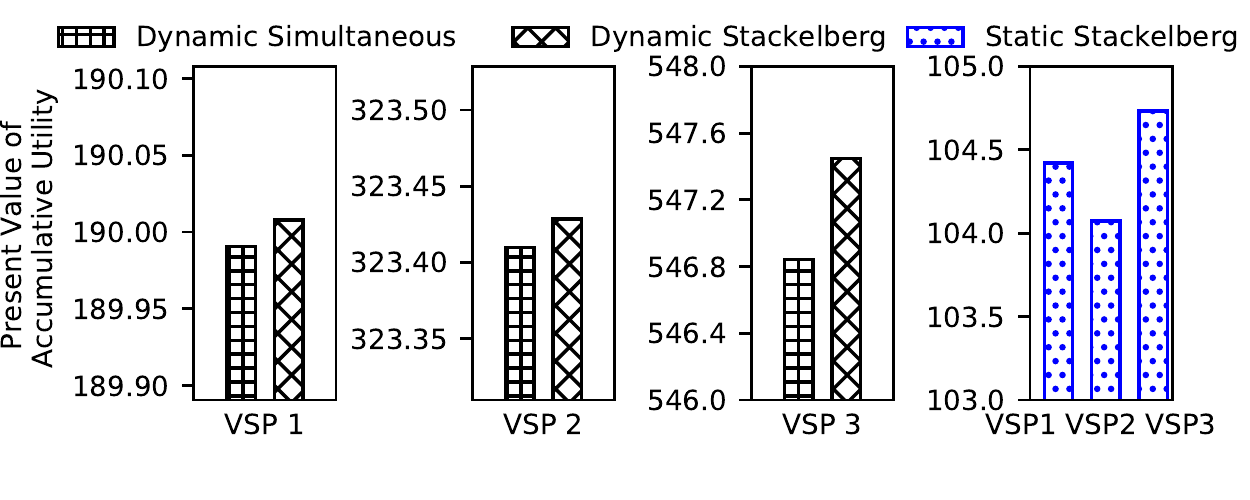}
\caption{ Comparison of the accumulative payoffs discounted at the present time for Stackelberg differential game, simultaneous differential game, and static Stackelberg game (baseline)}
\label{compareall}
\end{figure*}

\subsubsection{Varying $\rho$, the discount rate}
Figure \ref{rho_eta} shows the effect of VSPs' discount rate $\rho$ on the synchronization intensity $\bm{\eta}$ in equilibrium. We keep the experiment parameters the same as in the last experiment and set the value of $\delta_1$ to $0.05$. Then, we vary $\rho\in [0.5,1]$ in step size of $0.05$. We observe that the synchronization intensities for all VSPs decrease as the value of the discount factor increases. 
The reason is that the higher value of $\rho$, the lower the value of $e^{-\rho t}$ given a fixed $t$, thereby having a greater discounting effect on the future instantaneous utility at the time $t$. Therefore, $\mathfrak{J}_m,m=1,2,3,4$, the discounted cumulative payoffs, for all VSPs are smaller. In response, all VSPs lower their synchronization intensity.

\subsubsection{Varying $N$, the size of UAV population}
Figure \ref{vary_n} shows the effect of UAVs population size $n$ on the synchronization intensity $\bm{\eta}$ in equilibrium. We keep the experiment parameters the same as in the last experiment and set the value of $N$ from $350$ to $550$ in step size of $20$.
We observe that the synchronization intensities for all VSPs increase as $N$ increases. Because the higher value of $N$ is, the more data that can be supplied by the IoT group, thereby allowing the VSPs to have a higher synchronization intensity. Among the four VSPs, the increase for VSP $4$ is the most significant because its DTs have higher decay rates and therefore require higher synchronization data supply.

\subsubsection{Varying $\alpha_m$, the data price}

Figure \ref{data_price} shows the effect of data price $\alpha_m$ on the synchronization intensity $\bm{\eta}$ in equilibrium in the upper-level game. We keep the experiment parameters the same as in the last experiment and vary the value of $\alpha_1$ from $0.01$ to $1$. 
We observe that the synchronization intensities for VSP $1$ increase as the value of the $\alpha_1$ increases. Because the higher value of $\alpha_1$ indicates more accumulated payoffs can be obtained by VSP $1$, this allows VSP $1$ to afford a higher synchronization intensity. In contrast, the synchronization intensities for the remaining VSPs declined slightly. The reason is that more UAVs choose VSP $1$ and with a lower synchronized data supply, the synchronization intensities for the remaining VSPs drop.

\subsubsection{Varying $k_m$, the data request rate}
Figure \ref{data_request_rate} shows the effect of data request rate by DTs on the synchronization intensity $\bm{\eta}$ in equilibrium for the upper-level game. We keep the experiment parameter values the same as in the last experiment and vary the value of $k_1$ from $0.1$ to $0.5$. 
We observe that the synchronization rate for VSP $1$ decreases as the value of $k_1$ increases. The higher value of $k_1$ implies that more synchronization data are needed for one-time synchronization. In order to meet the demand for DTs based on the data supplied from UAVs, the VSPs choose to decrease the synchronization intensity such that demand and supply can be matched. 

\subsubsection{Varying $w^3_m$, the cost weight parameter in the objective functional}
Figure \ref{zcost_eta} shows the effect of the cost weight parameter $w^3_1$ on the synchronization intensity $\bm{\eta}$ in equilibrium in the upper-level game. $w^3_1$ is the weight parameter for the penalty term (i.e., cost to the VSPs) when the DTs' values are not meeting the threshold. We keep the experiment parameter values the same as in the last experiment and vary the cost weight parameter value of $w^3_1$ from $0.01$ to $0.03$. We observe that the synchronization rate for VSP $1$ increases as the value of  $w^3_1$ increases. The higher value of $w^3_1$ implies that more costs are incurred to VSP $1$ if the DTs' values are far away from the expected values. In response, VSP $1$ increases its synchronization intensity. 
Moreover, when the cost weight parameter is sufficiently large, the synchronization intensity at the equilibrium becomes unchanged.
The reason is that the cost incurred by low-valued DTs dominates the VSP $1$'s objective functional and therefore, VSP $1$'s equilibrium strategy becomes similar when the weight term is large.

\subsection{Comparison with the Hierarchical Play}
We conducted an experiment to compare three cases, \rom{1} simultaneous moves VSPs, i.e., a simultaneous differential game, \rom{2} hierarchical moves VSPs (i.e., the Stackelberg differential game, as presented in \cref{sec-upper-hierachical-play}), and \rom{3} a static Stackelberg game with an evolutionary game, as a benchmark. In the static Stackelberg game, there is no dynamic interaction between VSPs and UAVs: VSPs perform do a one-step optimization at the very beginning, and then UAVs populations evolve with the static synchronization intensity. In other words, in the static Stackelberg game, $\eta_m(t)=C_m,\forall t$, where $C_m$ is some constant that optimizes VSP $m$'s strategy. The steps to obtain a solution for a static Stackelberg game can be found in \cite{simaanStackelbergStrategyNonzerosum1973}. 
For illustration, we consider the case of one leader and two followers in the Stackelberg differential game and three simultaneous VSPs in the simultaneous differential game. 
As for the experiment setting, we consider VSPs $1$ to $3$ from the previous experiment and allow VSP $3$ to be the leader. Additionally, we set $\delta=0.02$ and $\rho=1$. 

As shown in \cref{compareall}, both the Stackelberg differential game and simultaneous differential game provide higher discounted cumulative payoffs than the static Stackelberg game. The reason is that both the Stackelberg differential game and simultaneous differential game capture the dynamic interactions between VSPs and UAVs and thus optimize the synchronization intensity control over time. It can also be observed that the Stackelberg differential game yields slightly higher discounted cumulative payoffs than the simultaneous differential game, especially for the leader. The reason is that the leader's decision can affect the followers' decisions, and thus the synchronization intensity control obtained by VSP $3$ (leader) under the Stackelberg differential game can help improve the overall discounted utilities. 

\section{Conclusion}
\label{sec-conclusion}
In this paper, we proposed a dynamic hierarchical framework to address the problem of DT synchronization for the virtual service providers in the Metaverse with assistance from UAVs. In particular, we propose a temporal value decay dynamics to measure the DT values and how they are affected by the VSPs synchronization strategy. In addition, a group of UAVs can assist VSPs to collect the most up-to-date status data of the physical counterparts to the DTs. Then, we adopted an evolutionary game to model the dynamic VSP selection behaviors for the population of UAVs. Open-loop Nash solutions were used to determine the optimal controls for the VSPs in the upper-level after formulating the upper-level problem as a differential game. To make the solution more realistic, we also considered the case where a group of VSPs can be the first movers in the market, and formulated it as a Stackelberg differential game. Experiments and proofs show that the equilibrium point of the lower-level game exists and is evolutionarily stable. In addition, the experiments demonstrate that dynamical games (both simultaneous differential game and Stackelberg differential game) outperform the static Stackelberg game. The equilibrium adaptation for different system parameters was also investigated. The extension to interoperability among VSPs will be considered in future work.

\section*{Acknowledgment}
This research was supported in part by the Alibaba Group through Alibaba Innovative Research (AIR) Program and Alibaba-Nanyang Technological University (NTU) Singapore Joint Research Institute(JRI); 
in part by the National Research Foundation, Singapore under its Emerging Areas Research Projects (EARP) Funding Initiative; 
in part by the National Research Foundation, Singapore, under AI Singapore Programme (AISG Award No: AISG-GC-2019-003);
in part by Singapore Ministry of Education (MOE) Tier 1 (RG16/20); 
and in part by National Research Foundation of Korea (NRF) Grant funded by the Korean Government (MSIT) under Grant 2021R1A2C2007638 and the MSIT under Grant IITP-2020-0-01821 supervised by the IITP.
\textit{}

\bibliographystyle{IEEEtran}
\bibliography{all2.bib}

\vspace{-1.5cm}

\newpage
%
%
%
%
%

\vfill

\end{document}